\newtheorem{thm}{Theorem}
\newtheorem{prp}{Proposition}
\newtheorem{rmk}{Remark}
\DeclareMathOperator*{\argmin}{arg\,min}
\renewcommand{\hat}{\widehat}
\renewcommand{\tilde}{\widetilde}
\title{Distribution-on-Distribution Regression with Wasserstein Metric: Multivariate Gaussian Case}
\author{Ryo Okano$^1$ \and Masaaki Imaizumi$^{1,2}$}
\address{$^1$The University of Tokyo\footnote{\textit{Address}:  Hongo 7-3-1, Bunkyo City, Tokyo, JAPAN. 113-8654. \\\textit{Mail}: \url{okano-ryo1134@g.ecc.u-tokyo.ac.jp}, \url{imaizumi@g.ecc.u-tokyo.ac.jp}}, $^2$RIKEN Center for Advanced Intelligence Project}
\date{\today}
\begin{document}

\maketitle

\begin{abstract}
Distribution data refers to a data set where each sample is represented as a probability distribution, a subject area receiving burgeoning interest in the field of statistics. Although several studies have developed distribution-to-distribution regression models for univariate variables, the multivariate scenario remains under-explored due to technical complexities. In this study, we introduce models for regression from one Gaussian distribution to another, utilizing the Wasserstein metric. These models are constructed using the geometry of the Wasserstein space, which enables the transformation of Gaussian distributions into components of a linear matrix space. Owing to their linear regression frameworks, our models are intuitively understandable, and their implementation is simplified because of the optimal transport problem's analytical solution between Gaussian distributions. We also explore a generalization of our models to encompass non-Gaussian scenarios. We establish the convergence rates of in-sample prediction errors for the empirical risk minimizations in our models. In comparative simulation experiments, our models demonstrate superior performance over a simpler alternative method that transforms Gaussian distributions into matrices. We present an application of our methodology using weather data for illustration purposes.
\end{abstract}

\textit{Keyword}: Distributional regression; Gaussian measure; Optimal transport; Wasserstein metric.

\section{Introduction}
The analysis of distribution data has gained significant attention in the field of statistics. Distribution data refers to data in which each sample is given in the form of a probability distribution or an empirical distribution generated from it. Examples include age-at-death distributions across different countries, house price distributions of different years, and distributions of voxel-voxel correlations of functional magnetic imaging signals. A distinctive feature of distribution data is that they take values in general metric spaces that lack a vector space structure. Existing complex data analysis methods, such as function or manifold data analysis methods, are inadequate for effectively handling distribution data due to their infinite dimensionality and non-linearity, posing significant challenges in processing. Developing methods and theories for analyzingdistribution data is an important and challenging problem for contemporary statistical practice.  Refer to \cite{petersen2022modeling} for a review of this topic.

A common approach to handling distribution data involves the application of the Wasserstein metric to a set of distributions.
The resulting metric space is known as the Wasserstein space (\cite{panaretos2020invitation}), where distribution data are considered as its elements.
There are several advantages to using the Wasserstein metric: it gives more intuitive interpretations of mean and geodesics compared to other metrics, and it reduces errors by rigorously treating constraints as distribution functions.
Based on this approach, numerous methods have been proposed for the anlaysis of distribution data (\cite{bigot2017geodesic,petersen2019wasserstein,petersen2019frechet,fan2021conditional,chen2021wasserstein,ghodratidistribution,zhang2022wasserstein}). 

This paper focuses on a problem of distribution-on-distribution regression, that is, the regression of one probability distribution onto another. 
In the distribution-on-distribution regression problem, the task involves defining a regression map between non-linear spaces, which makes this problem technically challenging.
The problem is used for comparing the temporal evolution of age-at-death distributions among different countries (\cite{chen2021wasserstein}, \cite{ghodratidistribution}) and predicting house price distributions in the United States(\cite{chen2021wasserstein}).
For univariate distributions, several studies have investigated distribution-on-distribution regression models using Wasserstein metric.
\cite{chen2021wasserstein} proposed a model utilizing geometric properties of the Wasserstein space, \cite{zhang2022wasserstein} presented an autoregressive model for distributional time series data, and 
\cite{ghodratidistribution}  introduced a model incorporating the optimal transport map associated with the Wasserstein space.
However, few studies proposed distribution-on-distribution regression models for the multivariate case with the Wasserstein metric.
For more detail, please refer to Section \ref{sec:comarison} for a comprehensive overview.

In this paper, we propose models for regressing one Gaussian distribution onto another. 
To define our models, we consider the space of Gaussian distributions equipped with the Wasserstein metric and use its tangent bundle structure to transform Gaussian distributions into matrices. 
Then, we boil down the Gaussian distribution-on-distribution regression to the matrix-on-matrix linear regression, using the transformation to the tangent bundle.
Based on the transformation, we proposed two models: a basic model for the case where predictor and response Gaussian distributions are low-dimensional, and a low-rank model incorporating a low-rank structure in the parameter tensor to address high-dimensional Gaussian distributions.
Additionally, we explore the extension of our proposed models to encompass non-Gaussian scenarios.

Our strategy and the model give several advantages:
(i) the strategy enables the explicit construction of regression maps using the closed-form expression for the optimal transport problem between Gaussian distributions,
(ii) it boils down the distribution-on-distribution regression problem to an easy-to-handle linear model while maintaining the constraint of distributions, and (iii) we can solve the linear model without computational difficulties.
The effectiveness of our approach is also demonstrated through simulations. In particular, in comparison to the matrix-on-matrix regression model without the Wassetstein metric, our approach achieves better accuracy, taking advantage of the use of the Wasserstein metric.

The remaining sections of the paper are organized as follows.
In Section \ref{sec:background}, we provide some background on the optimal transport and Wasserstein space.
In Section \ref{sec:model}, we introduce Gaussian distribution-on-distribution regression models and discuss their potential generalizations to accommodate non-Gaussian cases.
We show empirical risk minimization algorithms in our models in Section \ref{sec:estimator}, and analyze their in-sample prediction errors in Section \ref{sec:prediction_error}. 
We investigate the finite-sample performance of the proposed methods through simulation studies in Section \ref{sec:simulation}, and illustrate the application of the proposed method using weather data in Section \ref{sec:application}.
Section \ref{sec:conclusion} concludes.
Proofs of theorems and additional theoretical results are provided in Appendix.

\subsection{Related Studies}
There are several approaches to deal with distribution data apart from the Wasserstein metric approach.
\cite{petersen2016functional} introduced the log quantile density transformation, enabling the utilization of functional data methods for distribution data.
The Bayes space approach has also been proposed as a viable solution for handling distribution data (\cite{egozcue2006hilbert,van2014bayes,talska2020weighting}).

Within the framework of the Wasserstein metric approach, significant developments have been made in methods and theories for analyzing distribution data.
\cite{zemel2019frechet} considered the estimation for the Fr\'{e}chet mean, a notion of mean in the Wasserstein space, from distribution samples. 
\cite{bigot2018upper} established the minimax rates of convergence for these estimators.
\cite{petersen2019wasserstein} proposed the Wasserstein covariance measure for dependent density data.
\cite{bigot2017geodesic} developed the method of geodesic principal component analysis on the Wasserstein space.

Various regression models utilizing the Wasserstein metric have been proposed for distribution data. 
\cite{petersen2019frechet} developed regression models for coupled vector predictors and univariate random distributions as responses.
\cite{fan2021conditional} developed regression models for multivariate response distributions.
\cite{chen2021wasserstein} and \cite{ghodratidistribution} 
proposed regression models for scenarios where both regressors and responses are random distributions, and \cite{ghodrati2023transportation} studies its extension to the multivariate case. 
\cite{zhang2022wasserstein}
developed autoregressive models for density time series data.

\subsection{Notation}
\label{subsec:notation}
For $d \ge 1$, we denote the identity matrix of size $d \times d$ as $I_d$. 
$\text{Sym}(d)$ is a set of all symmetric matrices of size $d \times d$.
For a positive semidefinite matrix $A$, we denote its positive square root as $A^{1/2}$.
$\text{id}(\cdot)$ is the  identity map. For a Borel measurable function $f: \mathbb{R}^d \to \mathbb{R}^d$ and  Borel probability measure $\mu$ on $\mathbb{R}^d$, $f\#\mu$ is the push-forward measure defined by $f\#\mu(\Omega) = \mu(f^{-1}(\Omega))$ for any Borel set $\Omega$ in $\mathbb{R}^d$.
$\|\cdot \|$ denotes the  Euclidean norm.
${\mathcal{L}_{\mu}^2(\mathbb{R}^d)}$ is the sef of functions $f:\mathbb{R}^d \to \mathbb{R}^d$ such that $\int\|f(x)\|^2 d\mu(x) < \infty$, and is a Hilbert space with an inner product $\langle \cdot, \cdot \rangle_{\mu}$ defined as $\langle f, g \rangle_{\mu} = \int_{\mathbb{R}^d} 
f(x)^\top g(x)d\mu(x)$ for $f,g \in {\mathcal{L}_{\mu}^2(\mathbb{R}^d)}$.
We denote the norm induced by this inner product as $\|\cdot\|_{\mu}$.

For a matrix $A \in \mathbb{R}^{d_1 \times d_2}$, we denote its elements as $A[p, q]$ for $1 \le p \le d_1$ and $1 \le q \le d_2$.
For a tensor $\mathbb{A} \in \mathbb{R}^{d_1 \times d_2 \times d_3 \times d_4}$, we denote its
elements as 
$\mathbb{A}[p, q, r, s]$ 
for $1 \le p \le d_1, 1 \le q \le d_2, 1 \le r \le d_3$ and $1 \le s \le d_4$.
For a tensor $\mathbb{A} \in \mathbb{R}^{d_1 \times d_2 \times d_3 \times d_4}$ and indices $1 \le r \le d_3, 1 \le s \le d_4$, let
$\mathbb{A}[\cdot, \cdot, r, s] \in \mathbb{R}^{d_1 \times d_2}$ denote the $d_1 \times d_2$ matrix whose $(p, q)$-elements are given by $\mathbb{A}[p, q, r, s]$.
Likewise, for indices $1 \le p \le d_1, 1 \le q \le d_2$, $\mathbb{A}[p, q, \cdot, \cdot] \in \mathbb{R}^{d_3 \times d_4}$ denote the  $d_3 \times d_4$ matrix whose $(r, s)$-elements are given by $\mathbb{A}[p, q, r, s]$.
For vectors $a_1 \in \mathbb{R}^{d_1}, a_2 \in \mathbb{R}^{d_2}, a_3 \in \mathbb{R}^{d_3}$ and $a_4 \in \mathbb{R}^{d_4}$, let define the outer product $\mathbb{A} = a_1 \circ a_2 \circ a_3 \circ a_4 \in \mathbb{R}^{d_1 \times d_2 \times d_3 \times d_4}$ by
$\mathbb{A}[p, q, r, s] = a_1[p]a_2[q]a_3[r]a_4[s]$.
For two matrices $A_1, A_2 \in \mathbb{R}^{d_1 \times d_2}$, we define their inner product $\langle A_1, A_2 \rangle \in \mathbb{R}$ as $\langle A_1, A_2 \rangle = \sum_{p=1}^{d_1}\sum_{q=1}^{d_2}A_1[p, q]A_2[p, q]$. Furthermore, for a tensor
$\mathbb{A} \in \mathbb{R}^{d_1 \times d_2 \times d_3 \times d_4}$
and a matrix $A \in \mathbb{R}^{d_1 \times d_2}$, we define their product $\langle A, \mathbb{A} \rangle_2 \in \mathbb{R}^{d_3 \times d_4}$ as 
$\langle A, \mathbb{A} \rangle_2[r, s] = \sum_{p=1}^{d_1}\sum_{q=1}^{d_2}A[p,q]\mathbb{A}[p,q,r,s]$ for $1 \le r \le d_3$ and $1 \le s \le d_4$.
%{\rc [It is more reader-friendly to define the notations $A^*$ and $\mathbb{A}^*$ in Section 3.2.2, since they are very specific to this study and their motivations should be well-explained. If you want to give this notation in Section 1.2, we should recall the definitions in Section 3.2.2.]}

\section{Background} \label{sec:background}
In this section, we provide some background on optimal transport, the Wasserstein space, and its tangent space.
For more background, see e.g., \cite{villani2009optimal}, \cite{ambrosio2005gradient} and \cite{panaretos2020invitation}.
\subsection{Optimal Transport}
Let $\mathcal{W}(\mathbb{R}^d)$ be the set of Borel probability distributions on $\mathbb{R}^d$ with finite second moments. The 2-Wasserstein distance between $\mu_1, \mu_2 \in \mathcal{W}(\mathbb{R}^d)$ is defined by 
\begin{align}
    d_W(\mu_1, \mu_2)
=
\left(\inf_{\pi \in \Pi(\mu_1, \mu_2)} \int_{\mathbb{R}^d \times \mathbb{R}^d} \|x - y\|^2d\pi(x, y)\right)^{1/2}.
\label{eq:wasserstein}
\end{align}
Here, $\Pi(\mu_1, \mu_2)$ is the set of couplings of $\mu_1$ and $\mu_2$, that is, 
the set of joint distributions on $\mathbb{R}^d \times \mathbb{R}^d$ with marginal distributions $\mu_1$ and $\mu_2$.
In our setting, the minimizer $\pi$ in \eqref{eq:wasserstein}
always exists (Theorem 4.1 in \cite{villani2009optimal}), and is called an optimal coupling.
When $\mu_1$ is absolutely continuous with respect to the Lebesgue measure,
there exists a map $T: \mathbb{R}^d \to \mathbb{R}^d$ such that the joint distribution of $(\Bar{W}, T(\Bar{W}))$, where $\Bar{W} \sim \mu_1$,  is an optimal coupling in \eqref{eq:wasserstein}, and
such a map $T$ is uniquely determined $\mu_1$-almost everywhere (Theorem 1.6.2 in \cite{panaretos2020invitation}). The map $T$ is called the optimal transport map between $\mu_1$ and $\mu_2$, and
we denote it as $T_{\mu_1}^{\mu_2}$. 
When $d=1$, the optimal transport map has the following closed-form expression (Section 1.5 in \cite{panaretos2020invitation}):
\begin{equation}
    T_{\mu_1}^{\mu_2}(x)
    =
    F_{\mu_2}^{-1} \circ F_{\mu_1}(x), \quad x \in \mathbb{R}, 
    \label{eq:ot_onedim}
\end{equation}
where $F_{\mu_1}$ is the cumulative distribution function of $\mu_1$, and 
$F_{\mu_2}^{-1}$ is the quantile funciton of $\mu_2$.

\subsection{The Wasserstein Space and its Tangent Space}
\label{sec:tangent}
The Wasserstein distance $d_W$ is a metric on $\mathcal{W}(\mathbb{R}^d)$ (Chapter 6 in \cite{villani2009optimal}), and the metric space $(\mathcal{W}(\mathbb{R}^d), d_W)$ is called the Wasserstein space.
We give a notion of a linear space induced from the Wasserstein space, by applying the 
the basic concepts of Riemannian manifolds, as shown in \cite{ambrosio2005gradient}, \cite{bigot2017geodesic} and \cite{panaretos2020invitation}.

Let arbitrarily fix a reference measure $\mu_\ast \in \mathcal{W}(\mathbb{R}^d)$ which is 
absolutely continuous with 
respect to 
 the Lebesgue measure.
For any $\mu \in \mathcal{W}(\mathbb{R}^d)$, the geodesic 
from $\mu_\ast$ to $\mu$, $\gamma_{\mu_\ast, \mu}: [0, 1] \to \mathcal{W}(\mathbb{R}^d)$, is given by 
\begin{align}
    \gamma_{\mu_\ast, \mu}(t)
=
[t(T_{\mu_\ast}^{\mu} - \text{id}) + \text{id}]\#\mu_\ast, \quad t \in [0, 1].
\label{eq:geodesic}
\end{align}
The tangent space of the Wasserstein space at $\mu_\ast$ is defined by 
\begin{align}
    \mathcal{T}_{\mu_\ast}
=
\overline{
\{t(T_{\mu_\ast}^\mu - \text{id}): \mu \in \mathcal{W}(\mathbb{R}^d), t > 0 \}
},
\label{eq:tanent_space}
\end{align}
where the upper bar denotes the closure in terms of the norm $\|\cdot\|_{\mu_\ast}$ in the space $\mathcal{L}_{\mu_\ast}^2(\mathbb{R}^d)$.
The space $ \mathcal{T}_{\mu_\ast}$ is a subspace of ${\mathcal{L}_{\mu_\ast}^2(\mathbb{R}^d)}$ (Theorem 8.5.1 in \cite{ambrosio2005gradient}).
The exponential map $\text{Exp}_{\mu_\ast}: \mathcal{T}_{\mu_\ast} \to \mathcal{W}(\mathbb{R}^d)$ is then defined by 
\begin{align}
    \text{Exp}_{\mu_\ast}g
    =
    (g + \text{id}) \# \mu_\ast, \quad g \in \mathcal{T}_{\mu_\ast},
    \label{eq:exp_map}
\end{align}
and as its right inverse, the logarithmic map $\text{Log}_{\mu_\ast}: \mathcal{W}(\mathbb{R}^d) \to \mathcal{T}_{\mu_\ast}$ is given by 
\begin{align}
    \text{Log}_{\mu_\ast}\mu = T_{\mu_\ast}^\mu - \text{id}, \quad \mu \in \mathcal{W}(\mathbb{R}^d).
    \label{eq:log_map}
\end{align}
When $d=1$, 
the logarithmic map is isometric in the sense that 
\begin{equation}
    \|\text{Log}_{\mu_\ast}\mu_1 - \text{Log}_{\mu_\ast}\mu_2\|_{\mu_\ast} = 
d_W(\mu_1, \mu_2)
\label{eq:isomet_onedim}
\end{equation}
for all $\mu_1, \mu_2 \in \mathcal{W}(\mathbb{R})$ (Section 2.3.2 in \cite{panaretos2020invitation}).
Remind that $\|\cdot\|_{\mu^*}$ is the norm of ${\mathcal{L}_{\mu_\ast}^2(\mathbb{R}^d)}$ with the reference measure $\mu^\ast$, as defined in Section \ref{subsec:notation}.

\subsection{Specification with Gaussian Case}

We restrict our attention to the Gaussian measures. 
Let $\mathcal{G}(\mathbb{R}^d)$ be the set of Gaussian distributions on $\mathbb{R}^d$, and we call the metric space $(\mathcal{G}(\mathbb{R}^d), d_W)$ as the Gaussian space.

For two Gaussian measures
 $\mu_1 = N(m_1, \Sigma_1)$, $\mu_2 = N(m_2, \Sigma_2) \in \mathcal{G}(\mathbb{R}^d)$ with mean vectors $m_1,m_2 \in \mathbb{R}^d$ and covariance matrices $\Sigma_1,\Sigma_2 \in \mathbb{R}^{d\times d}$, the 
2-Wasserstein distance between them has the following closed-form expression (Section 1.6.3 in \cite{panaretos2020invitation}):
\begin{equation}
    d_W(\mu_1, \mu_2)
    =
    \sqrt{
    \|m_1 - m_2\|^2 + \text{tr}[\Sigma_1 + \Sigma_2 - 2(\Sigma_1^{1/2}\Sigma_2 \Sigma_1^{1/2})^{1/2}]}.
    \label{eq:W_formula}
\end{equation}
When $\Sigma_1$ is non-singular,
the optimal transport map between $\mu_1$ and $\mu_2$ also has the following closed-form expression (Section 1.6.3 in \cite{panaretos2020invitation}):
\begin{equation}
     T_{\mu_1}^{\mu_2}(x)
    = m_2 + S(\Sigma_1,\Sigma_2)(x-m_1), \quad x \in \mathbb{R}^d,
    \label{eq:OT_formula}
\end{equation}
where we define $S(\Sigma_1,\Sigma_2) = \Sigma_1^{-1/2}[\Sigma_1^{1/2} \Sigma_2 \Sigma_1^{1/2}]^{1/2} \Sigma_1^{-1/2}$ for two covariance matrices $\Sigma_1, \Sigma_2$.

We introduce a tangent space of Gaussian spaces.
Fix a Gaussian measure $\mu_\ast = N(m_\ast, \Sigma_\ast) \in \mathcal{G}(\mathbb{R}^d)$ as a reference measure with a non-singular covariance matrix $\Sigma_*$.
Replacing $\mathcal{W}(\mathbb{R}^d)$ with $\mathcal{G}(\mathbb{R}^d)$ in the definition of tangent space \eqref{eq:tanent_space}, we obtain the tangent space by a form of a function space
\begin{equation}
    \mathcal{TG}_{\mu_\ast} 
    =
\overline{
\{t(T_{\mu_\ast}^\mu - \text{id}): \mu \in \mathcal{G}(\mathbb{R}^d), t > 0 \}}.
\end{equation}
Using the form of the optimal transport map
\eqref{eq:OT_formula}, a function in the tangent space $\mathcal{TG}_{\mu_\ast}$ has the following form
\begin{equation}
    t(T_{\mu_\ast}^\mu - \text{id})(x)
=
t(m - S(\Sigma_*, \Sigma) m_\ast) + t(S(\Sigma_*, \Sigma) -I_d)x, \quad x \in \mathbb{R}^d. \label{eq:log_gaussian}
\end{equation}
This form implies that the function space $\mathcal{TG}_{\mu_\ast}$ is a set of affine functions of $x \in \mathbb{R}^d$.
Note that $\text{Exp}_{\mu_\ast}g \in \mathcal{G}(\mathbb{R}^d)$ holds for any $g \in \mathcal{TG}_{\mu_\ast}$, and also $\text{Log}_{\mu_\ast}\mu \in \mathcal{TG}_{\mu_\ast}$ holds for any $\mu \in \mathcal{G}(\mathbb{R}^d)$.

\section{Model} \label{sec:model}

In this section, we define regression models between Gaussian spaces using the above notion of tangent spaces. 
We first present our key idea of modeling and then develop two models.

\subsection{Idea: Nearly isometry between Gaussian Space and Linear Matrix Space}
\label{sec:transformation}

As our key idea, we give a nearly isometric map from  Gaussian space $\mathcal{G}(\mathbb{R}^d)$ to a linear matrix space. 
For $d \ge 1$, we define a set of symmetric matrices as 
\begin{align*}
    \Xi_d = \{(a, V) \in \mathbb{R}^{d \times (d+1)}: a \in \mathbb{R}^d, V \in \text{Sym}(d)\}, 
\end{align*}
which is obviously a linear space.
We will give a map from $\mathcal{G}(\mathbb{R}^d)$ to $\Xi_d$ and show that this map has certain isometric properties. 
This isometry map plays a
critical role in our regression model, given in the next subsection.
We fix a non-singular Gaussian measure $\mu_\ast = N(m_\ast, \Sigma_\ast) \in \mathcal{G}(\mathbb{R}^d)$ as a reference measure.

Preliminarily, we introduce an inner product on the space $\Xi_d$. 
For $(a,V), (b, U) \in \Xi_d$, we define 
\begin{equation}
    \langle (a, V), (b, U) \rangle_{m_\ast, \Sigma_\ast}
    =
    (a + V m_\ast)^\top(b + U m_\ast)
    +
    \text{tr}(V\Sigma_\ast U).
    \label{eq:inner_product}
\end{equation}
Then we can easily check that $\langle \cdot, \cdot \rangle_{m_\ast, \Sigma_\ast}$ satisfies the conditions of inner product.
This design follows an inner product for a space of affine functions.
Rigorously, for $a \in \mathbb{R}^d$ and $V \in \text{Sym}(d)$, we define an affine function $f_{a, V}(x) = a + Vx$ and its space $\mathcal{F}_{\text{aff}} = \{f_{a, V}: a \in \mathbb{R}^d, V \in \text{Sym}(d)\}$. 
Note that $\mathcal{TG}_{{\mu_\ast}} \subset \mathcal{F}_{\text{aff}}$ holds from \eqref{eq:log_gaussian}.
Then we consider an inner product between $f_{a,V}, f_{b,U} \in \mathcal{F}_{\text{aff}}$ with $(a,V), (b,U) \in \Xi_d$ as
\[
\langle f_{a,V}, f_{b,U}\rangle_{\mu_\ast}
=
\int_{\mathbb{R}^d}(a+Vx)^\top (b+Ux)d\mu_\ast(x)
=
(a + V m_\ast)^\top(b + U m_\ast)
    +
    \text{tr}(V\Sigma_\ast U).
\]
Inspired by the design, we obtain an inner product space $(\Xi_d, \langle \cdot , \cdot \rangle_{(m_\ast, \Sigma_\ast)})$.
The norm $\|\cdot\|_{(m_\ast, \Sigma_\ast)}$ induced by this inner product is specified as  
\begin{equation}
    \|(a, V)\|_{(m_\ast, \Sigma_\ast)}
=
\sqrt{\|a + Vm_\ast\|^2 + \text{tr}(V\Sigma_\ast V)}.
\label{eq:norm}
\end{equation}

We construct a nearly isometric map $\varphi_{\mu_\ast}$ from $(\mathcal{G}(\mathbb{R}^d), d_W)$ to $(\Xi_d, \|\cdot\|_{(m_\ast, \Sigma_\ast)})$ as
\begin{align}
    \varphi_{\mu_\ast} = \pi \circ \psi_{\mu_\ast}. \label{def:varphi}
\end{align}
We specify the maps $\psi_{\mu_\ast}: \mathcal{G}(\mathbb{R}^d) \to \mathcal{TG}_{{\mu_\ast}}$ and $\pi: \mathcal{F}_{\text{aff}} \to \Xi_d$ as follows.
First, $\psi_{\mu_\ast}$ is the logarithm map $\text{Log}_{\mu_\ast}(\cdot)$ as \eqref{eq:log_map} with restriction to $\mathcal{G}(\mathbb{R}^d)$. 
That is, for $\mu = N(m, \Sigma) \in \mathcal{G}(\mathbb{R}^d)$, $\psi_{\mu_\ast}\mu$ is the affine function of the form \eqref{eq:log_gaussian}.
Second, for an affine function $f_{a,V} \in \mathcal{F}_{\text{aff}}$, we define 
$
\pi f_{a,V}
=
(a, V).
$
For summary, the map $\varphi_{\mu_\ast}: \mathcal{G}(\mathbb{R}^d) \to \Xi_d$ in \eqref{def:varphi} is specified as
\begin{equation}
    \varphi_{\mu_\ast}\mu 
=
(m-S(\Sigma_\ast, \Sigma)m_\ast, S(\Sigma_\ast, \Sigma) - I), \quad \mu = N(m, \Sigma) \in \mathcal{G}(\mathbb{R}^d).
\label{eq:phi_map}
\end{equation}
We also define a map $\xi_{\mu_\ast}: \varphi_{\mu_\ast}\mathcal{G}(\mathbb{R}^d) \to \mathcal{G}(\mathbb{R}^d)$ as the left inverse of the map $\varphi_{\mu_\ast}$ by 
\begin{equation}
    \xi_{\mu_\ast}(a, V)
=
N(a+(V+I)m_\ast, (V+I)\Sigma_\ast(V+I)), \quad (a, V) \in \varphi_{\mu_\ast}\mathcal{G}(\mathbb{R}^d).
\label{eq:xi_map}
\end{equation}
Here,
a range of the map \eqref{eq:phi_map} with the domain $\mathcal{G}(\mathbb{R}^d)$ is written as
\begin{equation}
\varphi_{\mu_\ast}\mathcal{G}(\mathbb{R}^d)
=
\{(a, V) \in \Xi_d: V + I_d
\,\ \text{is positive semidefinite}\},
\end{equation}
which is obviously a subset of $\Xi_d$.

We obtain results on the distance-preserving property of the map $\varphi_{\mu_\ast}$.
As a preparation, for a $d \times d$ orthogonal matrix $U$, we define a class of Gaussian measures $\mathscr{C}_U \subset \mathcal{G}(\mathbb{R}^d)$ as
\begin{equation*}
    \mathscr{C}_U =
    \{ N(m, \Sigma) \in \mathcal{G}(\mathbb{R}^d) : m \in \mathbb{R}^d, \,\ U\Sigma U^\top \text{is diagonal} \}. 
\end{equation*}
Here, we give a formal statement.

\begin{prp}
Let $\mu_\ast \in \mathcal{G}(\mathbb{R}^d)$ be an arbitrary fixed reference measure. 
For any $\mu \in \mathcal{G}(\mathbb{R}^d)$,
we have
\begin{equation*}
    d_W(\mu, \mu_\ast)
    =
\|\varphi_{\mu_\ast}\mu\|_{(m_\ast, \Sigma_\ast)}.
\end{equation*}
Moreover, if $\mu_\ast \in \mathscr{C}_U$ holds, we have the following for any $\mu_1, \mu_2 \in \mathscr{C}_U$:
\[
d_W(\mu_1, \mu_2)
=
\|\varphi_{\mu_\ast}\mu_1 - \varphi_{\mu_\ast}\mu_2 \|_{(m_\ast, \Sigma_\ast)}.
\]
\label{prop:isomet}
\end{prp}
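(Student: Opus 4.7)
The plan is to establish both identities by direct computation, using the closed-form expressions in \eqref{eq:W_formula}, \eqref{eq:phi_map}, and the definition of the norm \eqref{eq:norm}. Throughout I write $S = S(\Sigma_\ast, \Sigma)$, and similarly $S_j = S(\Sigma_\ast, \Sigma_j)$ for $j=1,2$ in part two.

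For the first identity, substitute $\varphi_{\mu_\ast}\mu = (m - S m_\ast,\, S - I_d)$ into \eqref{eq:norm}. With $a = m - S m_\ast$ and $V = S - I_d$, the linear term cancels cleanly: $a + V m_\ast = m - m_\ast$, so $\|a + V m_\ast\|^2 = \|m - m_\ast\|^2$. For the trace term, expand $V \Sigma_\ast V = S \Sigma_\ast S - S \Sigma_\ast - \Sigma_\ast S + \Sigma_\ast$ and invoke two algebraic identities that follow directly from the definition $S = \Sigma_\ast^{-1/2}(\Sigma_\ast^{1/2} \Sigma \Sigma_\ast^{1/2})^{1/2} \Sigma_\ast^{-1/2}$: namely $S \Sigma_\ast S = \Sigma$, and $\mathrm{tr}(\Sigma_\ast S) = \mathrm{tr}((\Sigma_\ast^{1/2}\Sigma\Sigma_\ast^{1/2})^{1/2})$ by the cyclic property of the trace. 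Combining these yields $\mathrm{tr}(V\Sigma_\ast V) = \mathrm{tr}(\Sigma_\ast + \Sigma - 2(\Sigma_\ast^{1/2}\Sigma\Sigma_\ast^{1/2})^{1/2})$, which together with the mean term reproduces $d_W(\mu,\mu_\ast)^2$ from \eqref{eq:W_formula}.

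For the second identity, the key move is simultaneous diagonalization. Since $\mu_\ast, \mu_1, \mu_2 \in \mathscr{C}_U$, there exist diagonal positive matrices $D_\ast, D_1, D_2$ with $\Sigma_\ast = U^\top D_\ast U$ and $\Sigma_j = U^\top D_j U$. All square roots, inverses, and products then live in the algebra of matrices diagonal in the $U$-basis, and a short calculation gives $S_j = U^\top D_\ast^{-1/2} D_j^{1/2} U$. Setting $V = S_1 - S_2 = U^\top (D_1^{1/2} - D_2^{1/2}) D_\ast^{-1/2} U$ and $a = (m_1 - m_2) - V m_\ast$, the linear part again collapses to $a + V m_\ast = m_1 - m_2$. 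For the quadratic part, the diagonal basis lets one factor out $D_\ast^{-1/2} D_\ast D_\ast^{-1/2} = I$, leaving $\mathrm{tr}(V\Sigma_\ast V) = \mathrm{tr}((D_1^{1/2} - D_2^{1/2})^2) = \mathrm{tr}(D_1 + D_2 - 2 D_1^{1/2} D_2^{1/2})$. In the same basis, $(\Sigma_1^{1/2}\Sigma_2 \Sigma_1^{1/2})^{1/2} = U^\top D_1^{1/2} D_2^{1/2} U$, so this matches the $d_W(\mu_1,\mu_2)^2$ expression in \eqref{eq:W_formula} exactly.

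The main obstacle is conceptual rather than technical: one must see why the full isometry fails in general and is recovered precisely on $\mathscr{C}_U$. The map $\varphi_{\mu_\ast}$ is only ``nearly'' isometric because $S(\Sigma_\ast, \Sigma_1)$ and $S(\Sigma_\ast, \Sigma_2)$ need not commute, so the expansion of $V\Sigma_\ast V$ produces cross terms like $S_1 \Sigma_\ast S_2$ that do not simplify to $(\Sigma_1^{1/2}\Sigma_2\Sigma_1^{1/2})^{1/2}$ in general. The hypothesis $\mu_\ast, \mu_1, \mu_2 \in \mathscr{C}_U$ is exactly what forces these operators to lie in a common commutative algebra, which is what makes the cross term collapse to the correct Wasserstein quantity. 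Once this is recognized, the computation is routine matrix algebra.
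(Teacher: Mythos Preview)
Your proof is correct and follows essentially the same approach as the paper's: the first part is a direct computation identical to the paper's (you use the identity $S\Sigma_\ast S = \Sigma$ where the paper writes out $V\Sigma_\ast V$ before taking the trace, but this is the same calculation), and for the second part both arguments exploit that membership in $\mathscr{C}_U$ forces all the covariance square roots to commute. The only cosmetic difference is that you pass explicitly through the diagonal basis $U^\top D U$, whereas the paper invokes the commutativity $\Sigma_j^{1/2}\Sigma_\ast^{1/2} = \Sigma_\ast^{1/2}\Sigma_j^{1/2}$ directly to write $S_j = \Sigma_j^{1/2}\Sigma_\ast^{-1/2}$ and $d_W^2(\mu_1,\mu_2) = \|m_1-m_2\|^2 + \mathrm{tr}((\Sigma_1^{1/2}-\Sigma_2^{1/2})^2)$; these are the same argument in slightly different notation.
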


Note that since $\varphi_{\mu_\ast}\mu_\ast = 0$ holds, 
the first claim shows that
the Wasserstein distance between any Gaussian measure $\mu$ and the reference Gaussian measure $\mu_\ast$ is
equal to the distance between corresponding 
elements in the space $(\Xi_d, \|\cdot\|_{(m_\ast, \Sigma_\ast)})$.
The second claim shows that 
if we choose a class of Gaussian measures appropriately, the map $\varphi_{\mu_\ast}$ is isometric on that class. This isometric property is essentially illustrated in Section 2.3.2 in \cite{panaretos2020invitation} for the case of centered Gaussian distributions. Our claim can be understood as its generalization to the non-centered case.

\subsection{Regression Model}

In this section, 
we develop our regression models for the Gaussian-to-Gaussian distribution regression.
Our strategy is to map Gaussian distributions to the linear matrix spaces using the nearly isometric maps and then conduct linear regression between the matrix spaces.
Figure \ref{fig:regression_model} illustrates the strategy.
Specifically, we develop the following two models: (i) a basic model, and (ii) a low-rank model.
See Section \ref{subsec:notation} for the notation regarding matrices and tensors.

\begin{figure}
    \centering
    \includegraphics[width=0.9\hsize]{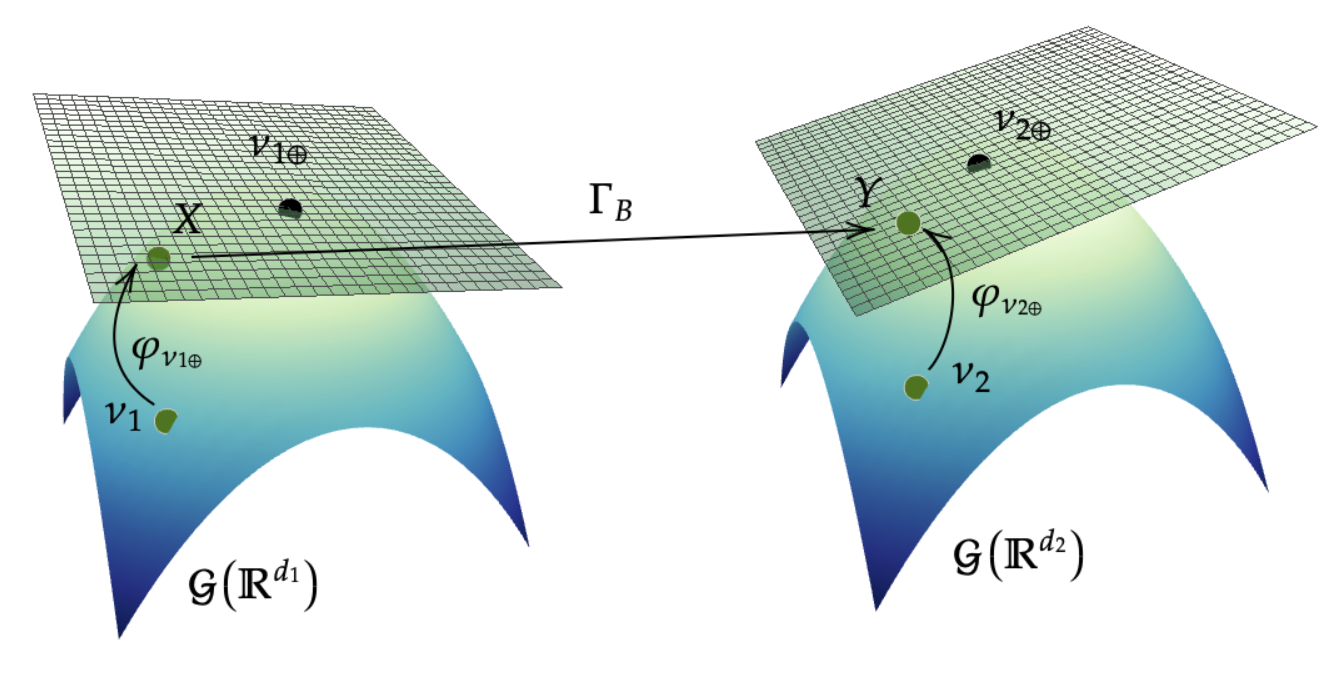}
    \caption{Illustration of structure of the proposed regression model between the Gaussain spaces $\mathcal{G}(\mathbb{R}^{d_1})$ and $\mathcal{G}(\mathbb{R}^{d_2})$. The Gaussian distributions $\nu_1$ and $\nu_2$ are transformed to the random elements $X$ and $Y$ in the linear matrix spaces $\Xi_{d_1}$ and $\Xi_{d_2}$ by the nearly isometric maps $\varphi_{\nu_{1\oplus}}$ and $\varphi_{\nu_{2\oplus}}$, respectively. Then, linear regression model with regression map $\Gamma_{\mathbb{B}_0}$ is assumed between $X$ and $Y$.}
    \label{fig:regression_model}
\end{figure}

We review the setup of the regression problem.
Let $d_1$ and $d_2$ be positive integers and $\mathcal{F}$ be a joint distribution on $\mathcal{G}(\mathbb{R}^{d_1}) \times  \mathcal{G}(\mathbb{R}^{d_2})$. 
Let $(\nu_1, \nu_2)$ be a pair of random elements generated by $\mathcal{F}$, where
we write $\nu_1 = N(m_1, \Sigma_1)$ and $ \nu_2 = N(m_2, \Sigma_2)$. 
We assume $\nu_1$ and $\nu_2$ are square integrable in the sense that $\max\{\mathbb{E}[d_W^2(\mu_1, \nu_{1})],\mathbb{E}[d_W^2(\mu_2, \nu_{2})]  \}< \infty$ for some (and thus for all) $\mu_1 \in \mathcal{G}(\mathbb{R}^{d_1})$ and 
$\mu_2 \in \mathcal{G}(\mathbb{R}^{d_2})$. 
In the following, we give models for dealing with this joint distribution $\mathcal{F}$.

\subsubsection{Basic model}

The first step is to define reference measures to introduce the nearly isometric maps.
For $j \in \{1,2\}$, we define the Fr\'{e}chet mean of the random Gaussian distribution $\nu_j$ as 
\begin{equation}
    \nu_{j\oplus} = N(m_{j\oplus}, \Sigma_{j\oplus})
    =
    \argmin_{\mu_j \in \mathcal{G}(\mathbb{R}^{d_j})} \mathbb{E}[d_W^2(\mu_j, \nu_j)],
    \label{eq:frechet}
\end{equation}
with the mean vector $m_{j\oplus} \in \mathbb{R}^{d_j}$ and the covariance matrix $\Sigma_{j\oplus} \in \mathbb{R}^{d_j \times d_j}$.
Note that the Fr\'echet means $\nu_{1\oplus}$ and $\nu_{2\oplus}$ are also Gaussian, and we assume they uniquely exist and are non-singular.

Using the Fr\'{e}chet means $\nu_{1 \oplus}$ and $\nu_{2 \oplus}$ as reference measures,
we transform random Gaussian distributions $\nu_1$ and $\nu_2$ to 
random elements $X \in \Xi_{d_1}$ and $Y \in \Xi_{d_2}$ by 
\begin{align*}
    X = \varphi_{\nu_{1\oplus}} \nu_1, \mbox{~and~}Y = \varphi_{\nu_{2\oplus}} \nu_2,
\end{align*}
where $\varphi_{\nu_{1\oplus}} $ and $\varphi_{\nu_{2\oplus}} $ are the nearly isometric maps in \eqref{eq:phi_map}.

For the random matrices $X$ and $Y$ transformed from the random distributions $\nu_1$ and $\nu_2$ as above, we perform a matrix-to-matrix linear regression.
To the aim, we consider a coefficient tensor $\mathbb{B} \in  \mathbb{R}^{d_1 \times(d_1+1)\times d_2 \times(d_2+1)}$ and define its associated linear map 
\begin{align*}
    \Gamma_{\mathbb{B}}: \mathbb{R}^{d_1 \times (d_1+1)} \to \mathbb{R}^{d_2 \times (d_2 + 1)}, { B} \mapsto \langle {B}, \mathbb{B} \rangle_2.
\end{align*}
%{\rc [The input matrix should be represented by another symbol, since $A$ is used for a matrix for defining $\mathbb{A}$.]}
Remind that $\langle \cdot, \cdot \rangle_2$ is a product for tensors defined in Section \ref{subsec:notation}.
To deal with the symmetricity of matrices in $\Xi_{d_1}$ and $\Xi_{d_2}$, we define the following class of coefficient tensors:
\begin{align}
    \mathcal{B} = 
    \{&
    \mathbb{B} \in \mathbb{R}^{d_1 \times(d_1+1)\times d_2 \times(d_2+1)} \\
    &:  
    \mathbb{B}[\cdot, \cdot, r, s] = \mathbb{B}[\cdot, \cdot, s-1, r+1] \,\ \text{for}\,\ 1 \le r \le d_2, 2 \le s \le d_2+1 
    \}.
    \label{eq:space_basic}
\end{align}
This definition guarantees $\langle {B}, \mathbb{B} \rangle_2 \in \Xi_{d_2}$ holds for any $\mathbb{B} \in \mathcal{B}$ and ${B} \in \Xi_{d_1}$.
%{\rc [Same comment: the input matrix should be represented by another symbol, since $A$ is used for a matrix for defining $\mathbb{A}$.]}

We now give the linear regression model.
We assume that the $(\Xi_{d_1} \times \Xi_{d_2})$-valued random element $(X,Y)$, which is obtained by the transform of the random pair if distributions $(\nu_1,\nu_2)$, follows the following linear model with some $\mathbb{B}_0 \in \mathcal{B}$:
\begin{align}
    Y = \Gamma_{\mathbb{B}_0} (X) + E,  ~~ \mathbb{E}[E | X] = 0, \label{eq:model}
\end{align}
where $E$ is a $\Xi_{d_2}$-valued random element as an error term.
Note that $\mathbb{B}_0$ is not necessarily unique.
We can rewrite this model into an element-wise representation such that 
\begin{align}
    Y[r, s] = \langle X, \mathbb{B}_0[\cdot, \cdot, r, s] \rangle + E[r, s], \quad \mathbb{E}[E[r, s] | X] = 0, \label{eq:model_2}
\end{align} 
for $ 1 \le r \le d_2, 2 \le s \le d_2+1$.
Furthermore, we impose the following assumption on the data-generating process in this model:
\begin{equation}
    \Gamma_{\mathbb{B}_0}(X) \in \varphi_{\nu_{2\oplus}}\mathcal{G}(\mathbb{R}^{d_2}) \quad \text{with probability 1}.
    \label{eq:log_cond}
\end{equation}

For summary, we consider a regression map $\Gamma_{\mathcal{G}, \mathbb{B}_0}$ between the Gaussian spaces $\mathcal{G}(\mathbb{R}^{d_1})$ and $\mathcal{G}(\mathbb{R}^{d_2})$ as
\begin{align}
    \Gamma_{\mathcal{G}, \mathbb{B}_0} = \xi_{\nu_{2\oplus}} \circ \Gamma_{\mathbb{B}_0} \circ \varphi_{\nu_{1\oplus}}. \label{def:regression_model}
\end{align}
Note that our model satisfies $\Gamma_{\mathcal{G}, \mathbb{B}}(\nu_{1\oplus}) = \nu_{2\oplus}$ for any $\mathbb{B}$, since we have $\varphi_{\nu_{1\oplus}}\nu_{1\oplus} = 0$ and $\xi_{\nu_{2\oplus}}(0) = \nu_{2\oplus}$, 

Note that our model satisfies $\Gamma_{\mathcal{G}, \mathbb{B}_0}(\nu_{1\oplus}) = \nu_{2\oplus}$ , since we have $\varphi_{\nu_{1\oplus}}\nu_{1\oplus} = 0$ and $\xi_{\nu_{2\oplus}}(0) = \nu_{2\oplus}$. In other words, the regression map $\Gamma_{\mathcal{G}, \mathbb{B}_0}$ maps the Fr\'{e}chet mean of $\nu_{1}$ to that of $\nu_{2}$.

\begin{rmk}[Scalar response model]
A variant of the proposed basic model is the pairing of Gaussian distributions with scalar responses. 
In this case, the regression comes down to matrix-to-scalar linear regression.
Let  $(\nu_1, Z)$ be a pair of random elements
with a joint distribution on $\mathcal{G}(\mathbb{R}^{d_1}) \times \mathbb{R}$, and
let $\nu_{1\oplus} = (m_{1\oplus}, \Sigma_{1\oplus})$ be the Fr\'{e}chet mean of $\nu_1$ in $\mathcal{G}(\mathbb{R}^{d_1})$. 
A  Gaussian distribution-to-scalar regression model is 
\begin{align}
    Z = \langle X, \mathbb{B}_0 \rangle + \varepsilon, \quad \mathbb{E}[\varepsilon | X] = 0. \label{def:linear_model}
\end{align}
Here,  $X = \varphi_{\nu_{1\oplus}}\nu_1$  is an element in $\Xi_{d_1}$, $\mathbb{B}_0 \in \mathbb{R}^{d_1 \times (d_1+1)}$ is the regression parameter and $\varepsilon$ is a real-valued error term.
\end{rmk}

\subsubsection{Low-Rank Model}
\label{sec:low_rank_model}

We consider the case where the coefficient tensor $\mathbb{B}$ is assumed to have low-rank, as an extension of the basic model.
The issue with the basic model \eqref{eq:model} is that 
the number of elements in $\mathbb{B}$ is $d_1(d_1+1)d_2(d_2+1)$, which is high dimensional and far exceeds the usual sample size when $d_1$ and $d_2$ are not small.
A natural way to handle this issue is to approximate $\mathbb{B}$ with fewer parameters, and we
employ the low-rank CP decomposition of tensors for that purpose.
This approach was employed by \cite{zhou2013tensor} for a tensor regression model for scalar outcome, and by \cite{lock2018tensor} for a tensor-on-tensor regression model.

We define the low-rank coefficient tensor.
Let $K$ be a positive integer such that $K\le \min \{d_1, d_2\}$.
Then a tensor $\mathbb{A} \in \mathbb{R}^{d_1 \times (d_1+1) \times d_2 \times (d_2+1)}$ admits a rank-$K$ decomposition (e.g., \cite{kolda2009tensor}), if it holds that
\begin{align}
    \mathbb{A} = 
    \sum_{k=1}^K a_1^{(k)} \circ a_2^{(k)} \circ a_3^{(k)} \circ a_4^{(k)},
\label{eq:decomposition}
\end{align}
where
 $a_1^{(k)} \in \mathbb{R}^{d_1},  a_2^{(k)} \in \mathbb{R}^{d_1+1}, a_3^{(k)} \in \mathbb{R}^{d_2}, a_4^{(k)} \in \mathbb{R}^{d_2+1} (k=1, ..., K)$ are all column vectors. 
For convenience, we represent the decomposition \eqref{eq:decomposition} by a shorthand
\begin{align}
\mathbb{A} = \llbracket A_1, A_2, A_3, A_4 \rrbracket,    
\label{eq:decomposition_short}
\end{align}
 where $A_1 = [a_1^{(1)}, ..., a_1^{(K)}] \in \mathbb{R}^{d_1 \times K}, A_2 = [a_2^{(1)}, ..., a_2^{(K)}] \in \mathbb{R}^{(d_1+1) \times K}, A_3 = [a_3^{(1)}, ..., a_3^{(K)}] \in \mathbb{R}^{d_2 \times K}, A_4 = [a_4^{(1)}, ..., a_4^{(K)}] \in \mathbb{R}^{(d_2+1) \times K}$.

Based on this decomposition, we propose a rank-$K$ model for Gaussian distribution-to-distribution regression.
We will use the following notations: for a matrix $C \in \mathbb{R}^{d_2 \times (d_2+1)}$, we define a matrix $C^\ast \in \mathbb{R}^{d_2 \times (d_2+1)}$ by $C^\ast[r, 1] = C[r, 1]$ for $1 \le r \le d$ and $C^\ast[r, s] = C[s-1, r+1]$ for $1 \le r \le d_2, 2 \le s \le d_2+1$. Moreover, for a tensor $\mathbb{A} \in \mathbb{R}^{d_1 \times (d_1+1) \times d_2 \times (d_2+1)}$,  we define a tensor   $\mathbb{A}^\ast \in \mathbb{R}^{d_1 \times (d_1+1) \times d_2 \times (d_2+1)}$ as $\mathbb{A}^\ast[p, q, \cdot, \cdot]  = \mathbb{A}[p, q, \cdot, \cdot]^\ast$ for $1 \le p \le d_1, 1\le q \le d_1+1$. 
Then, we consider the regression parameter $\mathbb{B}_0$ in \eqref{eq:model} is assumed to have the form $\mathbb{B}_0 = (\mathbb{A}_0 + \mathbb{A}_0^\ast)/2$, where $\mathbb{A}_0 \in \mathbb{R}^{d_1 \times (d_1+1) \times d_2\times(d_2+1)}$ is a tensor with the rank-$K$ decomposition \eqref{eq:decomposition}. 
Under this assumption, the symmetric condition in \eqref{eq:space_basic} holds, that is, we have 
\[
\langle B, \mathbb{B}_0 \rangle_2
=
\left\langle B, \frac{\mathbb{A}_0 + \mathbb{A}_0^\ast}{2} \right\rangle_2 
=
\frac{\langle B, \mathbb{A}_0 \rangle_2 + \langle B, \mathbb{A}_0^\ast \rangle_2}{2}
=
\frac{\langle B, \mathbb{A}_0 \rangle_2 + \langle B, \mathbb{A}_0 \rangle_2^\ast}{2} \in \Xi_{d_2}
\]
for any $B \in \mathbb{R}^{d_1 \times (d_1+1)}$.

We denote the 
resulting parameter space for the rank-$K$ model as
 \begin{align}
 \mathcal{B}_{\text{low}} = \{\mathbb{B} = &(\mathbb{A} + \mathbb{A}^\ast)/2 \in \mathbb{R}^{d_1 \times (d_1+1) \times d_2 \times(d_2+1)} \notag\\ 
  &:\text{$\mathbb{A}$ has the rank-$K$ decomposition \eqref{eq:decomposition} } \}. 
   \label{eq:space_low}
\end{align}
The number of elements of the tensor $\mathbb{B} \in \mathcal{B}_{\text{low}}$ is $2K(d_1+d_2+1)$, which is much smaller than $d_1(d_1+1)d_2(d_2+1)$ when $d_1$ and $d_2$ are large.

\subsection{
Comparison with Existing Models in Terms of Generalization to Multivariate Case}\label{sec:comarison}

For the univariate case where $d_1 = d_2 = 1$, regression models applying the Wasserstein metric to distribution-on-distribution were introduced by \cite{chen2021wasserstein,zhang2022wasserstein, ghodratidistribution, ghodrati2023transportation, zhu2023autoregressive}.
\cite{chen2021wasserstein} and \cite{zhang2022wasserstein} transformed distributions in the Wasserstein space $\mathcal{W}(\mathbb{R})$ to elements in  its tangent space \eqref{eq:tanent_space} by the logarithmic map \eqref{eq:log_map}, and boiled down distribution-on-distribution regression to function-on-function linear regression.
Because the logarithmic map \eqref{eq:log_map} is isometric in the univariate case, their methods fully utilize the geometric properties of the Wasserstein space.
\cite{ghodratidistribution} modeled the regression operator from $\mathcal{W}(\mathbb{R})$ to $\mathcal{W}(\mathbb{R})$ by using the optimal transport map.
This approach enabled to interpret the regression effect directly at the level of probability distributions through a re-arrangement of probability mass.

Despite the effectiveness of these models for univariate distribution-on-distribution regression, their extension to the multivariate scenario remains non-trivial.
This challenge primarily arises from two reasons.
The first reason is that the explicit solution of the optimal transport problem for univariate distributions \eqref{eq:ot_onedim} is not available for the multivariate case. 
This brings numerical difficulties in the computation of optimal transport maps, which is required to transform distributions to unconstrained functions in the model by \cite{chen2021wasserstein}. 
The derivation of optimal transport maps also becomes essential when devising estimators for the regression map within \cite{ghodratidistribution}'s model.
The second reason is that the flatness of the Wasserstein space, that is, the isometric property of the logarithmic map \eqref{eq:isomet_onedim}, does not hold for the multivariate case in general.
This means the transformation method by \cite{chen2021wasserstein} lacks the theoretical support for preserving the geometric properties of the Wasserstein space in the multivariate case. 
Moreover, the identifiability result of the regression map in the model by \cite{ghodratidistribution}, which depends on the flatness of the Wasserstein space, is hard to be generalized for the multivariate case.
Another study \cite{ghodrati2023transportation} analyzes the multivariate case and reveals several theoretical properties such as the sample complexity.

We addressed these challenges by limiting the class of distributions to Gaussian distributions.
In our model, we transform Gaussian distributions to unconstrained matrices via the map \eqref{eq:phi_map}. 
Consequently, we simplify the regression of Gaussian distribution-on-Gaussian distribution to matrix-on-matrix linear regression. Given the explicit expression of the optimal transport map between Gaussian distributions as \eqref{eq:OT_formula}, our transformation avoids computational difficulties.
Although our transformation is not isometric in general, it has certain isometric properties as shown in Proposition \ref{prop:isomet}.
This guarantees that our transformation method partially utilizes the geometric properties of the Gaussian space.

\subsection{Generalization to Elliptically Symmetric Distributions}

Our proposed regression models extend to scenarios where distributions $\nu_1$ and $\nu_2$ belong to the class of elliptically symmetric distributions, a broader category than Gaussian distributions.
This is because, as shown in \cite{gelbrich1990formula}, the closed-form expression of the Wasserstein distance \eqref{eq:W_formula} holds if two distributions are in the same class of elliptically symmetric distributions.

We give more rigorous description.
Let $d \ge 1$ and let $f: [0, \infty) \to [0, \infty)$ be a measurable function that is not almost everywhere zero and satisfies
\begin{align}
    \int_{-\infty}^\infty |t|^\ell f(t^2)dt < \infty, \quad \ell = d-1, d, d+1.
    \label{f_cond}
\end{align}
Given such a function $f$, for a positive definite matrix $A \in \mathbb{R}^{d \times d}$ and a vector $v \in \mathbb{R}^d$,  one can consider a density function of the form $f_{A, v}(x) = (c_A)^{-1}f((x-v)^\top A(x-v)), x \in \mathbb{R}^d$. Here, 
we define $c_A = \int_{\mathbb{R}^d} f((x-v)^\top A(x-v)) dx$ as the normalizing constant.
Then, we can consider a class of distributions on $\mathbb{R}^d$ whose elements have a density $f_{A, v}$ for some positive definite matrix $A \in \mathbb{R}^{d \times d}$ and vector $v \in \mathbb{R}^d$.
We denote such a class as $\mathcal{P}_f(\mathbb{R}^d)$, and call it as the class of 
elliptically symmetric distributions with function $f$.
For example, if we set $f(t) = e^{-t/2}$, we obtain the set of Gaussian distributions with positive definite covariance matrices as $\mathcal{P}_f(\mathbb{R}^d)$.
Furthermore, by setting $f(t) = I_{[0, 1]}(t)$, we obtain the set of uniform distributions on ellipsoids of the forms $U_{A, v} = \{x \in \mathbb{R}^d: (x-v)^{\top}A(x-v) \le 1\}$ for some positive definite matrix $A \in \mathbb{R}^{d \times d}$ and vector $v \in \mathbb{R}^d$.

According to Theorem 2.4 of \cite{gelbrich1990formula}, the closed-forms of the Wasserstein distance \eqref{eq:W_formula} and optimal transport map \eqref{eq:OT_formula} are valid for any two measures $\mu_1, \mu_2$ in the same class of elliptically symmetric distributions $\mathcal{P}_f(\mathbb{R}^d)$.
Since our models rely only the forms \eqref{eq:W_formula}, \eqref{eq:OT_formula}, our result can be extended to the case in which $(\nu_1, \nu_2)$ are $\mathcal{P}_{f_1}(\mathbb{R}^{d_1}) \times \mathcal{P}_{f_2}(\mathbb{R}^{d_2})$-valued random elements. 
Note that $f_1,f_2:[0, \infty)\to [0, \infty)$ should be non-vanishing and satisfy the condition \eqref{f_cond}  for $d=d_1$ and $d=d_2$, respectively.

\section{Empirical Risk Minimization Algorithms} \label{sec:estimator}
In this section, we propose empirical risk minimization procedures for constructing a prediction model following the regression map $\Gamma_{\mathcal{G}, \mathbb{B}_0}$ (\ref{def:regression_model}) based on observed data. 
Specifically, we consider two cases: (i) we directly observe random distributions (Section \ref{sec:est_full}), and (ii) we observe only samples from the random distributions (Section \ref{sec:est_notfull}).
We refer the estimation issue of the coefficient tensor $\mathbb{B}_0$ itself and its related topics to Appendix.

\subsection{Algorithm with Directly Observed Distributions}
\label{sec:est_full}
Suppose that we directly observe $n$ independent pairs of random Gaussian distributions $(\nu_{1i}, \nu_{2i}) \sim \mathcal{F}$ for $i=1,...,n$.
Here, we write $\nu_{ji} = N(\mu_{ji}, \Sigma_{ji})$ for $j \in \{1, 2\}$.
Firstly, based on the distributions $\nu_{ji}(i=1, ..., n;j=1, 2)$, we compute the empirical Fr\'{e}chet means for $j \in \{1,2\}$:
\begin{align}
    \tilde{\nu}_{j\oplus}
    =
    \argmin_{\mu_j \in \mathcal{G}(\mathbb{R}^{d_j})} \frac{1}{n} \sum_{i=1}^n d_W^2(\mu_j,{\nu}_{ji} ),
    \label{empirical_frechet}
\end{align}
where we write  $\tilde{\nu}_{j\oplus} = N(\tilde{m}_{j\oplus}, \tilde{\Sigma}_{j\oplus})$.
For solving optimizations in \eqref{empirical_frechet},
we can use
the steepest descent algorithm (Section 5.4.1 in \cite{panaretos2020invitation}).
Then, we transform Gaussian distributions ${\nu}_{ji}$ into matrices by $\tilde{X}_i = \varphi_{\tilde{\nu}_{1\oplus}}\nu_{1i}$ and 
$\tilde{Y}_i = \varphi_{\tilde{\nu}_{2\oplus}}\nu_{2i}$.
In the basic model, we solve the following least squares problem:
\begin{equation}
    \tilde{\mathbb{B}}
\in
\argmin_{\mathbb{B} \in \mathcal{B}} \sum_{i=1}^n \|\tilde{Y}_i - \Gamma_{\mathbb{B}}(\tilde{X}_i)\|^2_{(\tilde{m}_{2\oplus}, \tilde{\Sigma}_{2\oplus})},
\label{eq:optim_basic}
\end{equation}
where $\mathcal{B}$ is the parameter space defined by \eqref{eq:space_basic}, and $\| \cdot \|_{(\tilde{m}_{2\oplus}, \tilde{\Sigma}_{2\oplus})}$ denotes the norm defined by \eqref{eq:norm}
for $m_\ast = \tilde{m}_{2\oplus} $ and $\Sigma_\ast = \tilde{\Sigma}_{2\oplus}$.
In the rank-$K$ model, we solve the following least squares problem:
\begin{equation}
    \tilde{\mathbb{B}}
\in
\argmin_{\mathbb{B} \in \mathcal{B}_{\text{low}}} \sum_{i=1}^n \|\tilde{Y}_i - \Gamma_{\mathbb{B}}(\tilde{X}_i) \|_{(\tilde{m}_{2\oplus}, \tilde{\Sigma}_{2\oplus})}^2,
\label{eq:optim_low}
\end{equation}
where $\mathcal{B}_{\text{low}}$ is the parameter space defined by \eqref{eq:space_low}.
In either case, we use $\Gamma_{\mathcal{G}, \tilde{\mathbb{B}}} = \xi_{\tilde{\nu}_{2\oplus}} \circ \Gamma_{\tilde{\mathbb{B}}} \circ \varphi_{\tilde{\nu}_{1\oplus}}$ as the map for prediction.

We propose an algorithm for solving the optimization problem in \eqref{eq:optim_low}. 
We observe that although the tensor $\mathbb{A}$ in 
$\mathbb{B} = (\mathbb{A} + \mathbb{A}^\ast)/2$ with rank $K$-decomposition \eqref{eq:decomposition_short} is not linear in $(A_1, A_2, A_3, A_4)$ jointly, it is linear in $A_c$ individually for $c=1, 2, 3, 4$. 
This observation suggests a so-called block relaxation algorithm (\cite{de1994block}), which alternately updates $A_c, c=1, 2, 3, 4$, while keeping the other matrices fixed. 
This algorithm is employed in \cite{zhou2013tensor} for parameter estimation in a tensor regression model.
We denote the objective function in the optimization problem in \eqref{eq:optim_low} as
\begin{align}
    \ell(A_1, A_2, A_3, A_4)
=
\sum_{i=1}^n \|\tilde{Y}_i - \Gamma_{\llbracket A_1, A_2, A_3, A_4 \rrbracket}(\tilde{X}_i) \|_{(\tilde{m}_{2\oplus}, \tilde{\Sigma}_{2\oplus})}^2.
\label{eq:optimfunc_tensor}
\end{align}

Then the procedure for solving the optimization problem in \eqref{eq:optim_low} is summarized in Algorithm \ref{alg:block}.
First, we generate initialized matrices $A_1^{(0)}, A_2^{(0)}, A_3^{(0)}, A_4^{(0)}$, whose elements follow  the uniform distribution on some compact interval.
Then, with a number of iteration $T \in \mathbb{N}$, we generate a sequence $\{(A_1^{(t)}, A_2^{(t)}, A_3^{(t)}, A_4^{(t)})\}_{t=1}^T$ by the iterative update in Algorithm \ref{alg:block}.
As the block relaxation algorithm monotonically decreases the objective function \cite{de1994block}, and the function $\ell$ is bounded from below, the convergence of objective values $\ell(A_1^{(t)}, A_2^{(t)}, A_3^{(t)}, A_4^{(t)})$ is guaranteed. 

The algorithm should be run multiple times with different initializations to get a better minimum.

\begin{figure}[!t]
\begin{algorithm}[H]
    \caption{Block relaxation algorithm for minimizing \eqref{eq:optim_low}.}
    \label{alg:block}
    \begin{algorithmic}
    \STATE \textbf{Initialize}:$A_1^{(0)} \in \mathbb{R}^{d_1 \times K}, A_2^{(0)} \in \mathbb{R}^{(d_1+1) \times K}, A_3^{(0)} \in \mathbb{R}^{d_2 \times K}, A_4^{(0)} \in \mathbb{R}^{(d_2+1)\times K}$. 
    \FOR{$t=1,...,T$}
    \STATE  
$A_1^{(t+1)} = \argmin_{A_1} \ell(A_1, A_2^{(t)}, A_3^{(t)}, A_4^{(t)})$
\STATE 
$A_2^{(t+1)} = \argmin_{A_2} \ell(A_1^{(t+1)}, A_2, A_3^{(t)}, A_4^{(t)})$
\STATE 
$A_3^{(t+1)} = \argmin_{A_3} \ell(A_1^{(t+1)}, A_2^{(t+1)}, A_3, A_4^{(t)})$
\STATE $A_4^{(t+1)} = \argmin_{A_4} \ell(A_1^{(t+1)}, A_2^{(t+1)}, A_3^{(t+1)}, A_4)$

    \ENDFOR
    \end{algorithmic}
\end{algorithm}
\end{figure}

\subsection{Algorithm with Samples of Not Directly Observed Distributions}
\label{sec:est_notfull}
In this section, suppose that we observe only samples from the random Gaussians $(\nu_{1i}, \nu_{2i})$, instead of the direct observation on $(\nu_{1i}, \nu_{2i})$ in Section \ref{sec:est_full}. 
Rigorously, we assume the following two-step data generating process. 
First, $n$ independent pairs of Gaussian distributions $(\nu_{1i}, \nu_{2i}) \sim \mathcal{F}~(i=1, ..., n)$ are generated.
Next, the $N$ sample vectors $W_{jim} \sim \nu_{ji} (m=1, ..., N)$ are generated from the distributions, then we observe the sample vectors.
For each fixed $(i, j)$, the $W_{jim}$ are independent and identically distributed.

At the beginning, we develop a proxy for each Gaussian distribution $\nu_{ji} = N(\mu_{ji}, \Sigma_{ji})$.
For $i=1,...,n$ and $j \in \{1,2\}$, we consider the empirical mean and covariance of $W_{jim}$ as
\[
    \hat{\mu}_{ji}
    =
    \frac{1}{N}\sum_{m=1}^N W_{jim} \quad 
    \text{and} \quad \hat{\Sigma}_{ji}
    =
    \frac{1}{N}\sum_{m=1}^N (W_{jim} - \hat{\mu}_{ji})(W_{jim} - \hat{\mu}_{ji})^\top,
\]
for estimators of $\mu_{ji}$ and $\Sigma_{ji}$, respectively. 
We define $\hat{\nu}_{ji} = N(\hat{\mu}_{ji}, \hat{\Sigma}_{ji})$ and use it for a proxy of $\nu_{ji} = N(\mu_{ji}, \Sigma_{ji})$.
Based on these proxies, we compute the empirical Fr\'{e}chet means for $j \in \{1,2\}$:
\begin{align}
    \hat{\nu}_{j\oplus}
    =
    \argmin_{\mu_j \in \mathcal{G}(\mathbb{R}^{d_j})} \frac{1}{n} \sum_{i=1}^n d_W^2(\mu_j,\hat{\nu}_{ji} ),
    \label{empirical_frechet_notfull}
\end{align}
where we write $\hat{\nu}_{1\oplus} = N(\hat{m}_{1\oplus}, \hat{\Sigma}_{1\oplus}), 
\hat{\nu}_{2\oplus} = N(\hat{m}_{2\oplus}, \hat{\Sigma}_{2\oplus})
$.
As with the directly observed case,
we can use
the steepest descent algorithm 
for solving this optimization. 
Then, we transform Gaussian distributions $\hat{\nu}_{ji}$ into matrices by $\hat{X}_i = \varphi_{\hat{\nu}_{1\oplus}}\hat{\nu}_{1i}$ and 
$\hat{Y}_i = \varphi_{\hat{\nu}_{2\oplus}}\hat{\nu}_{2i}$.
In the basic model, we solve the following least squares problem:
\[
\hat{\mathbb{B}}
\in
\argmin_{\mathbb{B} \in \mathcal{B}} \sum_{i=1}^n \|\hat{Y}_i - \Gamma_{\mathbb{B}}(\hat{X}_i)\|^2_{(\hat{m}_{2\oplus}, \hat{\Sigma}_{2\oplus})},
\]
where $\| \cdot \|_{(\hat{m}_{2\oplus}, \hat{\Sigma}_{2\oplus})}$ denotes the norm defined by \eqref{eq:norm}
for $m_\ast = \hat{m}_{2\oplus} $ and $\Sigma_\ast = \hat{\Sigma}_{2\oplus}$.
In the rank-$K$ model, we solve the following least squares problem:
\begin{align}
    \hat{\mathbb{B}}
\in
\argmin_{\mathbb{B} \in \mathcal{B}_{\text{low}}} \sum_{i=1}^n \|\hat{Y}_i - \Gamma_{\mathbb{B}}(\hat{X}_i) \|_{(\hat{m}_{2\oplus}, \hat{\Sigma}_{2\oplus})}^2.
\label{eq:optim_tensor_sample}
\end{align}
In either case, we use $\Gamma_{\mathcal{G}, \hat{\mathbb{B}}} = \xi_{\hat{\nu}_{2\oplus}} \circ \Gamma_{\hat{\mathbb{B}}} \circ \varphi_{\hat{\nu}_{1\oplus}}$ as the prediction map. 
As with the directly observed case, we can use the block relaxation algorithm for solving the optimization \eqref{eq:optim_tensor_sample} by the similar manner of Algorithm \ref{alg:block}.

\section{Analysis of in-sample prediction error} \label{sec:prediction_error}
In this section, we analyze the prediction error of the proposed models and algorithms.
We especially focus on the in-sample prediction error measured on the observations, which is naturally extended to the out-sample prediction error.
Here, suppose that we directly observe the pairs of Gaussian distributions  $(\nu_{1i}, \nu_{2i}), i=1, ..., n$ from the model \eqref{eq:model} as the case in Section \ref{sec:est_full}.
For simplicity, we assume that the true values of Fr\'{e}chet means $\nu_{1\oplus}$ and $\nu_{2\oplus}$ are known.
In addition, we treat predictors $\{\nu_{1i} \}_{i=1}^n$ as fixed in this analysis.
Based on the sample $(\nu_{1i}, \nu_{2i}), i=1, ..., n$, we solve the following least squares problem for $\tilde{\mathcal{B}} = \mathcal{B}$ or $\tilde{\mathcal{B}} = \mathcal{B}_{\text{low}}$:
\begin{equation}
    \tilde{\mathbb{B}} 
\in
\argmin_{\mathbb{B} \in {\tilde{\mathcal{B}}}}
\sum_{i=1}^n \|Y_i - \Gamma_{\mathbb{B}}(X_i) \|_{({m}_{2\oplus}, {\Sigma}_{2\oplus})}^2,
\label{eq:estmator_pred}
\end{equation}
where $X_i = \varphi_{\nu_{1\oplus}}\nu_{1i}$ and 
$Y_i = \varphi_{\nu_{2\oplus}}\nu_{2i}$.
Then, we define the prediction map.
Moreover, under the assumption that $\Gamma_{\tilde{\mathbb{B}}}(X_i) \in \varphi_{\nu_{2\oplus}}\mathcal{G}(\mathbb{R}^{d_2}) (i=1, ..., n)$, we define the in-sample prediction error with the Wasserstein metric in terms of the empirical measure by
\begin{align}
    \mathcal{R}_n({\Gamma}_{\mathcal{G}, \tilde{\mathbb{B}}}, {\Gamma}_{\mathcal{G}, {\mathbb{B}}_0}) =
\sqrt{\frac{1}{n}\sum_{i=1}^n d_W^2({\Gamma}_{\mathcal{G}, \tilde{\mathbb{B}}}(\nu_{1i}), 
{\Gamma}_{\mathcal{G}, {\mathbb{B}}_0}(\nu_{1i}))},
\end{align}
which is an analogy of the empirical $L^2$-norm.
We also assume that the $\Xi_d$-valued random variable $E$ in the linear model \eqref{def:linear_model} is Gaussian, that is,
that is, for any $A \in \Xi_d$,  $\langle E, A \rangle_{m_\ast, \Sigma_\ast}$ is a real Gaussian random variable.

In the following, we measure the in-sample prediction error of the basic model in terms of the Wasserstein distance. 
Note that this is unique to our distribution-on-distribution regression problem, and deriving the convergence rate of in-sample prediction error under this setting is not a trivial problem. 

\begin{thm}[Basic Model]
Suppose that $(\nu_{1i}, \nu_{2i}) (i=1, ..., n)$ are pairs of Gaussian distributions generated from the basic model \eqref{eq:model}, and that error matrices $E_i \in \Xi_{d_2}$ are Gaussian with mean $0$ and covariance with trace $1$, that is, $\mathbb{E}[E_i] = 0$ and $\mathbb{E}[\|E_i\|^2_{m_{2\oplus}, \Sigma_{2\oplus}}] = 1$.
Let $\tilde{\mathbb{B}} \in \mathcal{B}$ be an solution of the optimization \eqref{eq:estmator_pred}, and assume that $\Gamma_{\tilde{\mathbb{B}}}(X_i) \in \varphi_{\nu_{2\oplus}}\mathcal{G}(\mathbb{R}^{d_2})$ holds for $i=1, ..., n$.
Then, we have 
\begin{align*}
\mathcal{R}_n({\Gamma}_{\mathcal{G}, \tilde{\mathbb{B}}}, {\Gamma}_{\mathcal{G}, {\mathbb{B}}_0})= O_P(d_1d_2 / \sqrt{n}),
\end{align*}
as $n \to \infty$.
\label{thm:conv_rate_basic}
\end{thm}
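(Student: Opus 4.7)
My plan is to reduce the Wasserstein-based in-sample prediction error to the in-sample prediction error of a standard finite-dimensional multi-output linear regression in the matrix space $\Xi_{d_2}$, and then apply classical least-squares bounds entrywise.

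First I would establish a one-sided isometric property of $\varphi_{\nu_{2\oplus}}$: for any $\mu_1, \mu_2 \in \mathcal{G}(\mathbb{R}^{d_2})$,
\begin{equation*}
d_W(\mu_1, \mu_2) \le \|\varphi_{\nu_{2\oplus}}\mu_1 - \varphi_{\nu_{2\oplus}}\mu_2\|_{(m_{2\oplus}, \Sigma_{2\oplus})}.
\end{equation*}
The argument is that $(T_{\nu_{2\oplus}}^{\mu_1}(W), T_{\nu_{2\oplus}}^{\mu_2}(W))$ with $W \sim \nu_{2\oplus}$ is a (not necessarily optimal) coupling of $\mu_1$ and $\mu_2$, so its transport cost $\int \|T_{\nu_{2\oplus}}^{\mu_1}-T_{\nu_{2\oplus}}^{\mu_2}\|^2 d\nu_{2\oplus}$ upper-bounds $d_W^2(\mu_1,\mu_2)$. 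By construction of the inner product on $\Xi_{d_2}$, this $L^2(\nu_{2\oplus})$ norm between the affine functions $T_{\nu_{2\oplus}}^{\mu_j}-\mathrm{id}$ equals $\|\varphi_{\nu_{2\oplus}}\mu_1-\varphi_{\nu_{2\oplus}}\mu_2\|_{(m_{2\oplus},\Sigma_{2\oplus})}$. Applying this with $\mu_1=\Gamma_{\mathcal{G},\tilde{\mathbb{B}}}(\nu_{1i})$ and $\mu_2=\Gamma_{\mathcal{G},\mathbb{B}_0}(\nu_{1i})$, and using $\varphi_{\nu_{2\oplus}}\circ\xi_{\nu_{2\oplus}}=\mathrm{id}$ on the valid range (which holds under the hypothesis $\Gamma_{\tilde{\mathbb{B}}}(X_i)\in\varphi_{\nu_{2\oplus}}\mathcal{G}(\mathbb{R}^{d_2})$), I obtain
\begin{equation*}
\mathcal{R}_n^2({\Gamma}_{\mathcal{G}, \tilde{\mathbb{B}}}, {\Gamma}_{\mathcal{G}, {\mathbb{B}}_0})
\le \frac{1}{n}\sum_{i=1}^n \|\Gamma_{\tilde{\mathbb{B}}}(X_i)-\Gamma_{\mathbb{B}_0}(X_i)\|^2_{(m_{2\oplus},\Sigma_{2\oplus})}.
\end{equation*}

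Next I would bound the matrix-space error entrywise. For each $1\le r\le d_2$ and $2\le s\le d_2+1$, the elementwise representation \eqref{eq:model_2} is a scalar linear regression with parameter $\mathbb{B}_0[\cdot,\cdot,r,s]\in\mathbb{R}^{d_1\times(d_1+1)}$ of dimension $p_1=d_1(d_1+1)$ and Gaussian noise $E_i[r,s]$ whose variance is at most $\mathbb{E}[\|E_i\|_{(m_{2\oplus},\Sigma_{2\oplus})}^2]=1$ (by the trace hypothesis). The least-squares basic inequality together with Gaussian concentration on the hat-matrix projection gives
\begin{equation*}
\frac{1}{n}\sum_{i=1}^n\bigl(\langle X_i,(\tilde{\mathbb{B}}-\mathbb{B}_0)[\cdot,\cdot,r,s]\rangle\bigr)^2 = O_P(p_1/n) = O_P(d_1^2/n),
\end{equation*}
uniformly in $(r,s)$. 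Summing over the $O(d_2^2)$ entries and using that the $(m_{2\oplus},\Sigma_{2\oplus})$-norm is equivalent to the Frobenius norm (because $\Sigma_{2\oplus}$ is nonsingular and fixed) yields $\mathcal{R}_n^2=O_P(d_1^2 d_2^2/n)$, and taking a square root delivers the claimed $O_P(d_1 d_2/\sqrt{n})$ rate.

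The main obstacle is the first step: Proposition~\ref{prop:isomet} gives equality only for Gaussians in the aligned class $\mathscr{C}_U$, so the general contraction inequality $d_W \le \|\cdot\|_{(m_{2\oplus},\Sigma_{2\oplus})}$ must be derived separately from the sub-optimal coupling through the reference measure. Care is also needed to verify that the restriction of $\varphi_{\nu_{2\oplus}}\circ\xi_{\nu_{2\oplus}}$ to the relevant range is the identity and that the symmetry constraint defining $\mathcal{B}$ is respected; these only reduce the effective parameter dimension and hence do not worsen the upper bound. The entrywise summation is deliberately conservative—a joint Gaussian concentration argument could replace $d_1 d_2$ by a smaller constant—but it yields the stated rate with the least machinery.
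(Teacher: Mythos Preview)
Your proposal is essentially correct, and the contraction step is the same as the paper's: the paper invokes the nonnegative sectional curvature of the Wasserstein (hence Gaussian) space to obtain
\[
d_W(\mu_1,\mu_2)\le \|\varphi_{\nu_{2\oplus}}\mu_1-\varphi_{\nu_{2\oplus}}\mu_2\|_{(m_{2\oplus},\Sigma_{2\oplus})},
\]
which is exactly your sub-optimal-coupling argument stated in geometric language.

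Where you genuinely diverge is in bounding $\|\Gamma_{\tilde{\mathbb B}}-\Gamma_{\mathbb B_0}\|_n$. The paper works directly in the weighted norm via a covering-number computation: the ball $\mathcal B_n(\delta;\mathscr F)$ is isometric to a $\delta$-ball in the $\sim d_1^2d_2^2$-dimensional space $(\mathcal B,\|\cdot\|')$, a volume-ratio bound gives $\log N_n(t)\lesssim d_1^2d_2^2\log(1+2\delta/t)$, and the entropy integral feeds into the Hilbert-space least-squares result of Park et al.\ to yield $\delta_n=d_1d_2/\sqrt n$. Your route is more elementary---entrywise scalar least squares summed over $(r,s)$---but there are two points you gloss over. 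First, $\tilde{\mathbb B}$ minimises the \emph{weighted} norm, so it is not a priori the collection of entrywise least-squares solutions; in fact the fitted values do coincide because the normal equations for $\sum_i\|Y_i-\Gamma_{\mathbb B}(X_i)\|_H^2$ factor through the nonsingular Gram matrix of $H$ and reduce to $\sum_i(Y_i-\Gamma_{\mathbb B}X_i)[r,s]\,X_i=0$ for every $(r,s)$, but this deserves a sentence. Second, the trace-$1$ hypothesis is in the $(m_{2\oplus},\Sigma_{2\oplus})$-norm, so $\mathrm{Var}(E_i[r,s])\le 1$ holds only up to the norm-equivalence constant you invoke later, not literally. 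The paper's entropy argument avoids both issues by never leaving the weighted norm, and---more importantly for the sequel---it transfers verbatim to the rank-$K$ model, where the parameter set $\mathcal B_{\mathrm{low}}$ is nonlinear and your entrywise decomposition no longer applies.
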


This result shows that that our method achieves optimal convergence rates.
That is, the convergence rates in Theorem \ref{thm:conv_rate_basic} achieve the parametric rate $n^{-1/2}$ regarding the sample size $n$. 
This rate comes from our parametric assumption of Gaussianity on distributions. 
In contrast, existing distribution-on-distribution regression models do not impose parametric assumptions, which results in slower convergence rates of estimators for regression parameters. 
For example, in the regression model proposed by \cite{chen2021wasserstein}, an estimator for the regression operator achieve the same rate as the minimax rate for function-to-function linear regression in a certain case (Theorem1 in \cite{chen2021wasserstein}), which is generally  slower than the parametric rate. 
In the regression model proposed by \cite{ghodratidistribution}, an estimator for the regression map achieve the rate $n^{-1/3}$ (Theorem 3.8 in \cite{ghodratidistribution}), which is slower than the parametric rate.

Next, we study the in-sample prediction error of the rank-$K$ model.
This analysis provides an effect of the number of ranks $K$, in addition to the results of the basic model in Theorem \ref{thm:conv_rate_basic}.
\begin{thm}[Rank-$K$ Model]
Suppose $(\nu_{1i}, \nu_{2i}) (i=1, ..., n)$ are pairs of Gaussian distributions generated from the rank-K model defined in Section \ref{sec:low_rank_model}, and that error matrices $E_i \in \Xi_{d_2}$ are Gaussian with mean $0$ and covariance with trace $1$. 
Let $\tilde{\mathbb{B}} \in \mathcal{B}_{\text{low}}$ be an solution of the optimization 
\eqref{eq:estmator_pred}, and assume that $\Gamma_{\tilde{\mathbb{B}}}(X_i) \in \varphi_{\nu_{2\oplus}}\mathcal{G}(\mathbb{R}^{d_2})$ holds for $i=1, ..., n$.
Then, we have 
\begin{align}
\mathcal{R}_n({\Gamma}_{\mathcal{G}, \tilde{\mathbb{B}}}, {\Gamma}_{\mathcal{G}, {\mathbb{B}}_0})= O_P(\sqrt{K(d_1+d_2)} / \sqrt{n}),
\end{align}
as $n \to \infty$.
\label{thm:conv_rate_low}
\end{thm}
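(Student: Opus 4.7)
The plan is to follow the same three-step skeleton as in the proof of Theorem~\ref{thm:conv_rate_basic}, replacing the ambient-dimension count of $\mathcal{B}$ by an entropy bound tailored to the low-rank structure of $\mathcal{B}_{\text{low}}$.

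\medskip

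\emph{Step 1 (shared with Theorem~\ref{thm:conv_rate_basic}): reduction to a tangent-space error.} For any $Z_j = (a_j, V_j) \in \varphi_{\nu_{2\oplus}}\mathcal{G}(\mathbb{R}^{d_2})$ ($j = 1,2$), pushing a single draw $W \sim \nu_{2\oplus}$ through the two affine maps $x \mapsto a_j + (V_j + I_{d_2})x$ yields a (generally suboptimal) coupling of $\xi_{\nu_{2\oplus}}(Z_1)$ and $\xi_{\nu_{2\oplus}}(Z_2)$ whose expected squared cost equals $\|Z_1 - Z_2\|^2_{(m_{2\oplus}, \Sigma_{2\oplus})}$, giving the $1$-Lipschitz bound
\begin{equation*}
d_W\bigl(\xi_{\nu_{2\oplus}}(Z_1), \xi_{\nu_{2\oplus}}(Z_2)\bigr) \le \|Z_1 - Z_2\|_{(m_{2\oplus}, \Sigma_{2\oplus})}.
\end{equation*}
Applying this with $Z_1 = \Gamma_{\tilde{\mathbb{B}}}(X_i)$, $Z_2 = \Gamma_{\mathbb{B}_0}(X_i)$ and using linearity of $\Gamma_{\mathbb{B}}$ in $\mathbb{B}$ gives
\begin{equation*}
\mathcal{R}_n^2(\Gamma_{\mathcal{G},\tilde{\mathbb{B}}}, \Gamma_{\mathcal{G},\mathbb{B}_0}) \le \|\Delta\|_n^2 := \frac{1}{n}\sum_{i=1}^n \|\Gamma_{\Delta}(X_i)\|_{(m_{2\oplus}, \Sigma_{2\oplus})}^2, \quad \Delta := \tilde{\mathbb{B}} - \mathbb{B}_0,
\end{equation*}
so it suffices to bound the in-sample seminorm $\|\Delta\|_n$.

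\medskip

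\emph{Step 2: basic inequality and low-rank structure.} From optimality of $\tilde{\mathbb{B}}$ in \eqref{eq:estmator_pred} with $\tilde{\mathcal{B}} = \mathcal{B}_{\text{low}}$ and feasibility of $\mathbb{B}_0 \in \mathcal{B}_{\text{low}}$, the usual expansion-and-cancellation argument gives
\begin{equation*}
\|\Delta\|_n^2 \le \frac{2}{n}\sum_{i=1}^n \langle E_i, \Gamma_{\Delta}(X_i) \rangle_{(m_{2\oplus}, \Sigma_{2\oplus})}.
\end{equation*}
Writing $\tilde{\mathbb{B}} = (\tilde{\mathbb{A}} + \tilde{\mathbb{A}}^\ast)/2$ and $\mathbb{B}_0 = (\mathbb{A}_0 + \mathbb{A}_0^\ast)/2$ with $\tilde{\mathbb{A}}, \mathbb{A}_0$ of CP-rank at most $K$, the raw difference $\tilde{\mathbb{A}} - \mathbb{A}_0$ has CP-rank at most $2K$, so $\Delta$ lies in the symmetrised rank-$2K$ set $\mathcal{S}_{2K} := \{(\mathbb{C} + \mathbb{C}^\ast)/2 : \text{CP-rank}(\mathbb{C}) \le 2K\}$. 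Conditional on $\{X_i\}$, the right-hand side is a centred Gaussian process indexed by $\Delta \in \mathcal{S}_{2K}$ with sub-Gaussian increments of order $\|\cdot\|_n/\sqrt{n}$ (using the trace-one covariance hypothesis $\mathbb{E}[\|E_i\|^2_{(m_{2\oplus},\Sigma_{2\oplus})}] = 1$). A Dudley entropy integral combined with a peeling (slicing) argument then yields, with high probability,
\begin{equation*}
\sup_{\Delta \in \mathcal{S}_{2K},\, \|\Delta\|_n \le r} \frac{1}{n}\sum_{i=1}^n \langle E_i, \Gamma_{\Delta}(X_i)\rangle \lesssim \frac{r\sqrt{K(d_1+d_2)}}{\sqrt{n}}.
\end{equation*}
Substituted back into the basic inequality, this gives $\|\Delta\|_n = O_P(\sqrt{K(d_1+d_2)/n})$, which together with Step~1 yields the theorem.

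\medskip

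\emph{Main obstacle.} The delicate step is the entropy bound underlying the factor $\sqrt{K(d_1+d_2)}$, namely a covering number estimate for $\mathcal{S}_{2K}$ in the random seminorm $\|\cdot\|_n$. My plan is to first bound the covering number in a canonical Frobenius norm by covering each of the four CP factor matrices on a Euclidean ball and propagating through the multilinear outer product; this yields $\log N(\epsilon) \lesssim K(d_1 + d_2)\log(1/\epsilon)$, matching the $2K(d_1+d_2+1)$ intrinsic degrees of freedom of a rank-$2K$ tensor recorded in Section~\ref{sec:low_rank_model}. Transferring this entropy to $\|\cdot\|_n$ requires a uniform two-sided comparison with the Frobenius norm, restricted to rank-$2K$ tensors, which follows from a restricted-isometry-type event on the empirical second-moment operator of $\{X_i\}$ and holds with high probability because only the low-rank restricted version is needed. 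The hypothesis $\Gamma_{\tilde{\mathbb{B}}}(X_i) \in \varphi_{\nu_{2\oplus}}\mathcal{G}(\mathbb{R}^{d_2})$ enters only to license the reduction in Step~1 and plays no role in the chaining.
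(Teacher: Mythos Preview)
Your overall architecture matches the paper's: the Step~1 Lipschitz reduction (your coupling argument is exactly the mechanism behind the paper's appeal to nonnegative sectional curvature of the Wasserstein space) followed by an entropy-controlled bound on $\|\Gamma_{\tilde{\mathbb{B}}}-\Gamma_{\mathbb{B}_0}\|_n$. The paper packages the chaining/peeling by citing a black-box least-squares result for Hilbert-space responses (Theorem~\ref{thm:pred_error_hilbert}, from \cite{park2023towards}), which takes as input precisely a local Dudley integral in $\|\cdot\|_n$ and outputs the $O_P$ rate; your hands-on basic-inequality-plus-chaining is the same content unpacked.

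Where you diverge is in obtaining the entropy. The paper never touches a Frobenius norm and never forms the difference set $\mathcal{S}_{2K}$: it parametrises $\mathcal{B}_{\text{low}}$ directly by the CP factors $\theta=(\text{vec}(A_1),\dots,\text{vec}(A_4))\in\mathbb{R}^{2K(d_1+d_2+1)}$, pulls back the empirical seminorm $\|\cdot\|_n$ to the parameter space, and applies the volume-ratio covering bound there, giving $\log N_n(t,\mathcal{B}_n(\delta;\mathscr{F}_{\text{low}}),\|\cdot\|_n)\lesssim K(d_1+d_2)\log(1+2\delta/t)$ immediately. This entropy is already in the seminorm that chaining needs, so nothing has to be ``transferred.''

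Your RIP step is a genuine gap. In Section~\ref{sec:prediction_error} the predictors $\{X_i\}$ are treated as \emph{fixed}; there is no randomness in the design over which a ``high-probability restricted-isometry event'' could hold, and the theorem imposes no condition on the $X_i$ that would guarantee a deterministic two-sided comparison $\|\cdot\|_n\asymp\|\cdot\|_F$ on rank-$2K$ tensors. For a degenerate fixed design (e.g.\ all $X_i$ equal), $\|\cdot\|_n$ can annihilate directions of arbitrarily large Frobenius norm, so a Frobenius $\epsilon$-net need not be a $\|\cdot\|_n$ net of the $\|\cdot\|_n$-ball. The correct fix is the paper's: bound the covering number directly in $\|\cdot\|_n$ via the factor parametrisation (equivalently, control the Lipschitz constant of $\theta\mapsto(\Gamma_{\mathbb{B}(\theta)}(X_i))_i$ \emph{in the $\|\cdot\|_n$ metric}, where any design-dependent constant is absorbed by homogeneity of the Dudley integral), rather than passing through an ambient norm and a norm comparison.
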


Theorem \ref{thm:conv_rate_low} states an advantage of the low-rank model, in addition to the result that the model achieves the optimal parametric rate.
The constant part of the rate is $\sqrt{K(d_1+d_2)}$ in the rank-$K$ model while $d_1d_2$  in the basic model. This implies that when the dimensions of distributions $\nu_1, \nu_2$ are large, the regression map in the rank-$K$ model is better approximated than that in the basic model.

We add some discussion on the observations of distributions. 
Recall that we assume the true Fr\'{e}chet means $\nu_{1\oplus}, \nu_{2\oplus}$ are known, and distributions $(\nu_{1i}, \nu_{2i})$ are directly observed. Relaxing these assumptions presents additional challenges for theoretical analysis. Specifically, if we estimate the Fr\'{e}chet mean of $\nu_{2i}$ with the empirical Fr\'{e}chet mean $\tilde{\nu}_{2\oplus}$, we solve the least squares problem \eqref{eq:estmator_pred} by replacing $Y_i =  \log_{\nu_{2\oplus}}\nu_{2i}$ with $\tilde{Y}_i = \log_{\tilde{\nu}_{2\oplus}}\nu_{2i}$. Since $\tilde{Y}_1, ..., \tilde{Y}_n$ are not independent, the standard theory for analyzing the error of empirical risk minimization is not directly applicable in this setting. Moreover, if distributions are not directly observed and only samples from them are available, we need to tackle the discrepancy between the estimated distributions based on the sample and the actual distributions in the analysis. As for the estimation of the Fr\'{e}chet mean, \cite{le2022fast} derive the rates of convergence of empirical Fr\'{e}chet mean on the Gaussian space (Corollary 17 in \cite{le2022fast}), which may be helpful for further theoretical analysis. 

Finally, we prove the consistency and asymptotic normality of an estimator for identified regression parameters in the Appendix.

\section{Simulation Studies} \label{sec:simulation}
In this section, we investigate the predictive  performance of the proposed methods together with an alternative regression method through simulation studies. 
The purpose of these studies is to validate the usage of the proposed nearly isometric map for improving the accuracy in predicting distributions in terms of the Wasserstein metric.

As an alternative regression approach, we consider the following model between $\nu_{1i} \in \mathcal{G}(\mathbb{R}^{d_1})$ and $\nu_{2i} \in \mathcal{G}(\mathbb{R}^{d_2})$:
\begin{equation}
    W_i = \langle Z_i, \mathbb{D}_0 \rangle_2 + E_i, \quad \mathbb{E}[E_i | Z_i] = 0. 
    \label{eq:alt_model}
\end{equation}
Here, $Z_i = (m_{1i}, \Sigma_{1i}) \in S_{d_1}$ and
$W_i = (m_{2i}, \Sigma_{2i}) \in S_{d_2}$ are matrices obtained from Gaussian distributions $\nu_{1i} = N(m_{1i}, \Sigma_{1i})$ and $\nu_{2i} = N(m_{2i}, \Sigma_{2i})$, respectively.
$\mathbb{D}_0 \in \mathcal{B}$ is the regression parameter and $E_i \in S_{d_2}$ is the error matrix in this model.
Note that this alternative model does not consider the Wasserstein metric.

\subsection{Setting}
\label{sec:setting}
Setting $d_1 = d_2 = d$, we generate pairs of Gaussian distributions
$\{(\nu_{1i}, \nu_{2i})\}_{i=1}^n$
from a mixture of the proposed and alternative models as follows.
First, for $i=1, ..., n$, we independently generate binary random variable $C_i \in \{0, 1\}$ such that $\mathbb{P}(C_i = 0) = \mathbb{P}(C_i = 1) = 1/2$. Then, we generate a pair $(\nu_{1i}, \nu_{2i})$ form the proposed model if $C_i=0$, and from the alternative model if $C_i=1$. The way to generate a pair from each model is as follows.

\subsubsection{Generation form proposed model}
We firstly generate independent random variables $G_i^{(1)}, ..., G_i^{(d)} \sim N(0, 1)$, $H_i^{(1)}, ..., H_i^{(d)} \sim Exp(1)$ and set a  matrix $X_i \in S_d$ by 
\[
X_i = \begin{pmatrix}
G_i^{(1)} &H_i^{(1)} & & \text{\huge{0}} \\
\vdots  & & \ddots & \\
 G_i^{(d)}& \text{\huge{0}} & & H_i^{(d)}
\end{pmatrix}.
\]
Here, 
$Exp(1)$ is the exponential distribution with the rate parameter $1$.
Then we obtain a Gaussian distribution $\nu_{1i} = \xi_{\nu_{1\oplus}}X_i \in \mathcal{G}(\mathbb{R}^d)$, where
$\nu_{1\oplus}$ is the $d$-dimensional standard Gaussian distribution. Note that under this setting, the random distribution $\nu_{1i}$ has the Fr\'{e}chet mean $\nu_{1\oplus}$.
Next, we set the coefficient tensor  $\mathbb{B}_0 \in \mathbb{R}^{d \times (d+1) \times d \times (d+1)}$ as  
\[
\mathbb{B}_0[\cdot, \cdot, r, 1]
=
\begin{pmatrix}
1 & 0 & \cdots & 0 \\
\vdots  & & \ddots & \\
 1 & 0 & \cdots & 0
\end{pmatrix}, \quad 
\mathbb{B}_0[\cdot, \cdot, r, r+1]
=
\begin{pmatrix}
0 &(2d)^{-1} & & \text{\huge{0}} \\
\vdots  & & \ddots & \\
 0& \text{\huge{0}} & & (2d)^{-1}
\end{pmatrix}, 
\]

for $1 \le r \le d$,
and set the other elements to be zero. Additionally, we 
generate independent random variables $U_i^{(1)}, ..., U_i^{(d)} \sim N(0, 1), V_i^{(1)}, ..., V_i^{(d)} \sim  U(-1/2, 1/2)$ and 
set the error matrix $E_i \in S_d$ by 
\[
E_i = 
\begin{pmatrix}
U_i^{(1)} &V_i^{(1)} & & \text{\huge{0}} \\
\vdots  & & \ddots & \\
 U_i^{(d)}& \text{\huge{0}} & & V_i^{(d)}
\end{pmatrix}.
\]

Here, $U(-1/2, 1/2)$ is the uniform distribution on the interval $(-1/2, 1/2)$.
We set $Y_i = \langle X_i, \mathbb{B}_0 \rangle_2 + E_i$ and obtain a response Gaussian distribution $\nu_{2i} = \xi_{\nu_{2\oplus}}Y_i \in \mathcal{G}(\mathbb{R}^d)$, where $\nu_{2\oplus}$ is the $d$-dimensional standard Gaussian distribution.
Note that under this setting, the condition \eqref{eq:log_cond} holds and the random distribution $\nu_{2i}$ has the  Fr\'{e}chet mean $\nu_{2\oplus}$.

\subsubsection{Generation from alternative model}
We firstly generate independent random variables $G_i^{(1)}, ..., G_i^{(d)} \sim N(0, 1)$, $H_i^{(1)}, ..., H_i^{(d)} \sim Exp(1)$ and set a  matrix $Z_i \in S_d$ by 
\[
Z_i = \begin{pmatrix}
G_i^{(1)} &H_i^{(1)}+1 & & \text{\huge{0}} \\
\vdots  & & \ddots & \\
 G_i^{(d)}& \text{\huge{0}} & & H_i^{(d)}+1
\end{pmatrix}.
\]
Then, we obtain the Gaussian distribution $\nu_{1i} = N(m_{1i}, \Sigma_{1i})$ such that $Z_i = (m_{1i}, \Sigma_{1i})$. Next, we set the coefficient tensor  $\mathbb{D}_0 \in \mathbb{R}^{d \times (d+1) \times d \times (d+1)}$ as 
\[
\mathbb{D}_0[\cdot, \cdot, r, 1]
=
\begin{pmatrix}
1 & 0 & \cdots & 0 \\
\vdots  & & \ddots & \\
 1 & 0 & \cdots & 0
\end{pmatrix}, \quad 
\mathbb{D}_0[\cdot, \cdot, r, r+1]
=
\begin{pmatrix}
0 &(2d)^{-1} & & \text{\huge{0}} \\
\vdots  & & \ddots & \\
 0& \text{\huge{0}} & & (2d)^{-1}
\end{pmatrix}, 
\]

for $1 \le r \le d$,
and set the other elements to be zero. Additionally, we generate independent random variables $U_i^{(1)}, ..., U_i^{(d)} \sim N(0, 1), V_i^{(1)}, ..., V_i^{(d)} \sim  U(-1/2, 1/2)$ and 
set the error matrix $E_i \in S_d$ by 
\[
E_i = 
\begin{pmatrix}
U_i^{(1)} &V_i^{(1)} & & \text{\huge{0}} \\
\vdots  & & \ddots & \\
 U_i^{(d)}& \text{\huge{0}} & & V_i^{(d)}
\end{pmatrix}.
\]

We set $W_i = \langle Z_i, \mathbb{D}_0 \rangle_2 + E_i$ and obtain the response Gaussian distribution $\nu_{2i} = N(m_{2i}, \Sigma_{2i})$ such that $W_i = (m_{2i}, \Sigma_{2i})$.

From the above procedures, we have obtained pairs of Gaussian distributions $\{(\nu_{1i}, \nu_{2i})\}_{i=1}^n$. Finally, we draw $N$ independent sample vectors from each of the distributions $\{\nu_{1i}\}_{i=1}^n$ and $\{\nu_{2i}\}_{i=1}^n$.

\subsection{Performance Criterion}
For the proposed models, we construct estimators $\hat{\mathbb{B}}$ as described in Section \ref{sec:est_notfull}. 
For the alternative model \eqref{eq:alt_model}, we construct an estimator by solving the least square problem
\[
\hat{\mathbb{D}} \in \argmin_{\mathbb{D} \in \mathcal{B}}\sum_{i=1}^n \|\hat{W}_i - \langle \hat{Z}_i, \mathbb{D} \rangle_2 \|_F^2, 
\]
where $\hat{Z}_i = (\hat{m}_{1i}, \hat{\Sigma}_{1i})$ and
$\hat{W}_i = (\hat{m}_{2i}, \hat{\Sigma}_{2i})$.

To investigate the performance of the proposed and alternative methods, following simulations in \cite{chen2021wasserstein}, we generate 200 new predictors $\{\nu_{1i}\}_{i=n+1}^{n+200}$ in the way of Section \ref{sec:setting}
and compute the out-of-sample average Wasserstein discrepancy (AWD). For $i=n+1, ..., n+200$, we define the true response distribution $\nu_{2i}^\ast = N(m_{2i}^\ast, \Sigma_{2i}^\ast)$ by $\nu_{2i}^\ast = \xi_{\nu_{1\oplus}}\langle X_i, \mathbb{B}_0 \rangle_2$ if $C_i=0$, and by $(m_{2i}^\ast, \Sigma_{2i}^\ast) = \langle Z_i, \mathbb{D}_0\rangle_2$ if $C_i = 1$. Then, denoting the fitted response distributions by $\nu_{2i}^\#$, the out-of-sample AWD is given by  
\begin{equation}
\text{AWD}
=
\frac{1}{200}\sum_{i=n+1}^{n+200}d_W(\nu_{2i}^{\ast}, \nu_{2i}^{\#}).
\label{eq:awd}
\end{equation}

In the proposed model, 
when the fit of the response in the space $\Xi_{d_2}$ 
does not fall in the range of map $\varphi_{\hat{\nu}_{2\oplus}}$ , that is, 
\begin{equation}
\Gamma_{\hat{\mathbb{B}}}(X_i) \notin \varphi_{\hat{\nu}_{2\oplus}}\mathcal{G}(\mathbb{R}^{d_2}),
    \label{eq:fall}
\end{equation}
we need to modify the fit to calculate the fitted response distribution.
To handle this problem,
we use a boundary projection method similar to one proposed by \cite{chen2021wasserstein}. 
Specifically,  for $d \ge 1$, let $g_d: \mathbb{R}^{d \times (d+1)} \to \mathbb{R}^{d \times d}$ be the map such that $g((a, V)) = V$ for $(a, V) \in \mathbb{R}^{d \times (d+1)}$. 
If the event \eqref{eq:fall} happens, we 
calculate a constant $\eta_i$ such that 
\[
\eta_i = 
\max\{\eta \in [0, 1]: \eta (g_{d_2} \circ \Gamma_{\hat{\mathbb{B}}}(X_i)) + I_{d_2}\,\ \text{is positive semidefinite}\},
\]
and update the original fit by $\eta_i \Gamma_{\hat{\mathbb{B}}}(X_i)$.
Conceptually, we update the original fit by a projection onto the boundary of $\varphi_{\hat{\nu}_{2\oplus}}\mathcal{G}(\mathbb{R}^{d_2})$ along the line segment between the origin 0 and the fit $\Gamma_{\hat{\mathbb{B}}}(X_i)$.
In the alternative method, if $g_{d_2}(\langle X_i, \hat{\mathbb{D}} \rangle_2)$ is not positive semidefinite, we update $g_{d_2}(\langle X_i, \hat{\mathbb{D}} \rangle_2)$ by $\argmin_{C \in \text{Sym}^+(d_2)}\|C - g_{d_2}(\langle X_i, \hat{\mathbb{D}} \rangle_2)\|_F$.

\subsection{Results}
Firstly, we set $d = 2$ and consider four scenarios with $n \in \{50, 200\}$ and $N \in \{50, 500\}$. 
We simulate 500 runs for each $(n, m)$ pair, and for each Monte Carlo run, we compute the AWD \eqref{eq:awd} based on 200 new predictors.
The results of the proposed and alternative methods are summarized in the boxplots of Figure \ref{fig:awd_comparison}. 
In all four scenarios, the proposed method outperforms the alternative method. This result comes from the fact that the proposed method takes into account the geometry of the Wasserstein metric, while the alternative method does not. 
In this setting, the event \eqref{eq:fall} seldom happened even if the number of distributions $n$ is small.

\begin{figure}
    \centering
    \includegraphics[width=120mm]{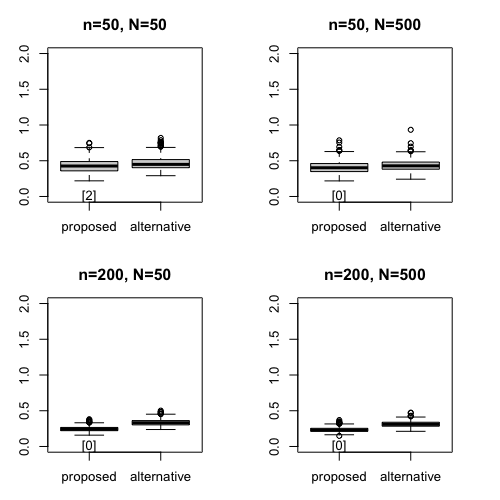}
    \caption{Boxplots of the out-of-sample AWDs defined as \eqref{eq:awd} for the four scenarios with $n \in \{50, 200\}$ and $N \in \{50, 500\}$. "proposed" denotes the proposed method and "alternative" denotes the alternative method. The number in brackets "[ ]" below the boxplots for the proposed indicates how many runs event \eqref{eq:awd} happened and boundary projection was needed.}
    \label{fig:awd_comparison}
\end{figure}

Next, we set $d=6, n=200, N=500$ and fit the proposed and alternative models whose regression tensors have rank $K \in \{2, 3, 4\}$. 
As with the previous experiment, we simulate 500 runs, and for each Monte Carlo run, we compute the AWD \eqref{eq:awd} based on 200 new predictors. 
The results are summarized in the boxplots of Figure \ref{fig:awd_comparison_tensor}.
In all cases, the proposed method outperforms the alternative method. In this setting, event (\ref{eq:fall}) happened more frequently than in the previous experiment.

\begin{figure}
    \centering
    \includegraphics[width=120mm]{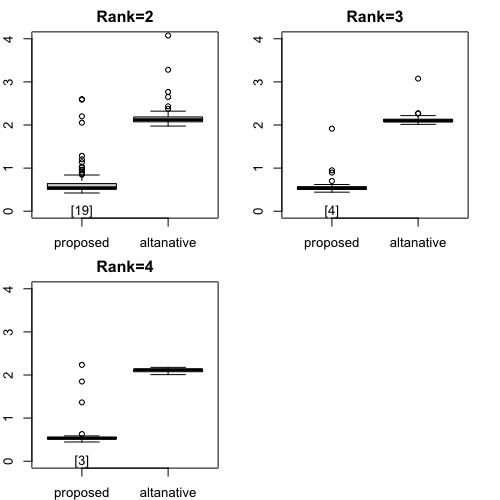}
    \caption{Boxplots of the out-of-sample AWDs defined as \eqref{eq:awd} for the low-rank methods with rank $K \in \{2, 3, 4\}$. "proposed" denotes the proposed method and "alternative" denotes the alternative method. The number in brackets "[ ]" below the boxplots for the proposed indicates how many runs event \eqref{eq:awd} happened and boundary projection was needed.}
    \label{fig:awd_comparison_tensor}
\end{figure}

Finally, to see the performance of the methods under the existence of model misspecification, we generate pairs of multivariate $t$-distributions $\{(t_{1i}, t_{2i})\}_{i=1}^n$ and fit the Gaussian-on-Gaussian regression models. 

Specifically, we firstly generate pairs of Gaussian distributions $\{(\nu_{1i}, \nu_{2i})\}_{i=1}^n$ from the mixture of the proposed and alternative models as described in Section \ref{sec:setting}.
Denoting these Gaussian distributions as $\nu_{1i} = N(m_{1i}, \Sigma_{1i}), \nu_{2i} = N(m_{2i}, \Sigma_{2i})$, we set multivariate $t$-distributions as $t_{1i} = t_\ell(m_{1i}, \Sigma_{1i}), t_{2i} = t_\ell(m_{2i}, \Sigma_{2i})$. 
Here, $t_{\ell}(m, \Sigma)$ denotes the multivariate t-distribution with location $m$, scale matrix $\Sigma$ and the degree of freedom $\ell$.
We draw an i.i.d. observations of size $N$ from each of the distributions $\{t_{1i}\}_{i=1}^n$ and $\{t_{2i}\}_{i=1}^n$, and construct estimators for the proposed and alternative models, respectively. Finally, we generate 200 new predictors $\{t_{1i}\}_{i=n+1}^{n+200}$ from the mixture model and calculate the out-of-sample AWD 
$
200^{-1}\sum_{i=n+1}^{n+200}d_W(t_{2i}^{\ast}, \nu_{2i}^{\#}).
$
Here, $t_{2i}^\ast = t_{\ell}(m_{2i}^\ast, \Sigma_{2i}^\ast)$ is the true response $t$-distribution whose location and scale are given by $N(m_{2i}^\ast, \Sigma_{2i}^\ast) = \xi_{\nu_{1\oplus}}\langle X_i, \mathbb{B}_0 \rangle_2$ if $C_i=0$, and by $(m_{2i}^\ast, \Sigma_{2i}^\ast) = \langle Z_i, \mathbb{D}_0 \rangle_2$ if $C_i = 1$. $\nu_{2i}^\#$ is the fitted response Gaussian distribution.
We set $d=2, n=200, N=500$ and consider three scenarios with the degree of the freedom $\ell \in \{5, 10, 15\}$. 
As with the previous experiments, we simulate 500 runs, and for each Monte Carlo run, we compute the AWD \eqref{eq:awd} based on 200 new predictors. 
The results of the proposed and alternative methods are summarized in the boxplots of Figure \ref{fig:awd_comparison_miss}.
In all three scenarios, the proposed method outperforms the alternative method. In addition, the prediction performance is getting better as the degree of freedom increases. This result comes  from the fact that as the degree of freedom increases, the $t$-distribution becomes more close to the Gaussian distribution, and thus there is less model misspecification. 

\begin{figure}
    \centering
\includegraphics[width=120mm]{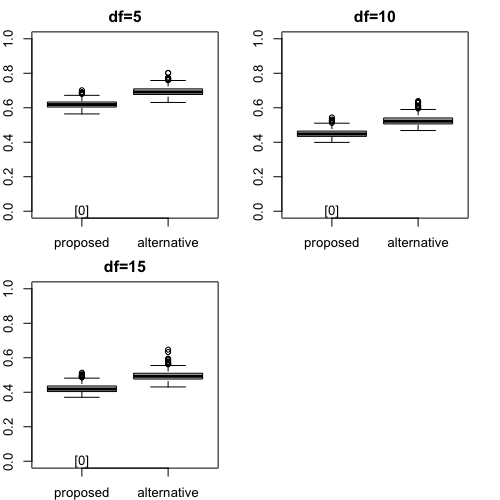}
    \caption{Boxplots of the out-of-sample AWDs defined as \eqref{eq:awd} for the three scenarios with the degree of the freedom $\ell \in \{5, 10, 15\}$. "proposed" denotes the proposed method and "alternative" denotes the alternative method. The number in brackets "[ ]" below the boxplots for the proposed indicates how many runs event \eqref{eq:fall} happened and boundary projection was needed.}
    \label{fig:awd_comparison_miss}
\end{figure}

\section{Applications} \label{sec:application}
In this section, we employ the proposed regression model to grasp the relationship between daily weather in spring (March, April, and May) and that in summer (Jun, July, and August) in Calgary, Alberta.
We obtain data from \url{https://calgary.weatherstats.ca}. This dataset contains the temperature and humidity for each day in Calgary from $1953$ to $2021$. We consider the joint distribution of the average temperatures recorded daily and the average relative humidity recorded daily. 
We regard each pair of daily values as one observation from a two-dimensional Gaussian distribution.
As examples, 
Figure \ref{fig:real_data_example} illustrates the observations and estimated Gaussian  densities for spring and summer in each year from $1953$ to $1956$. 

\begin{figure}
    \centering
\includegraphics[width=140mm]{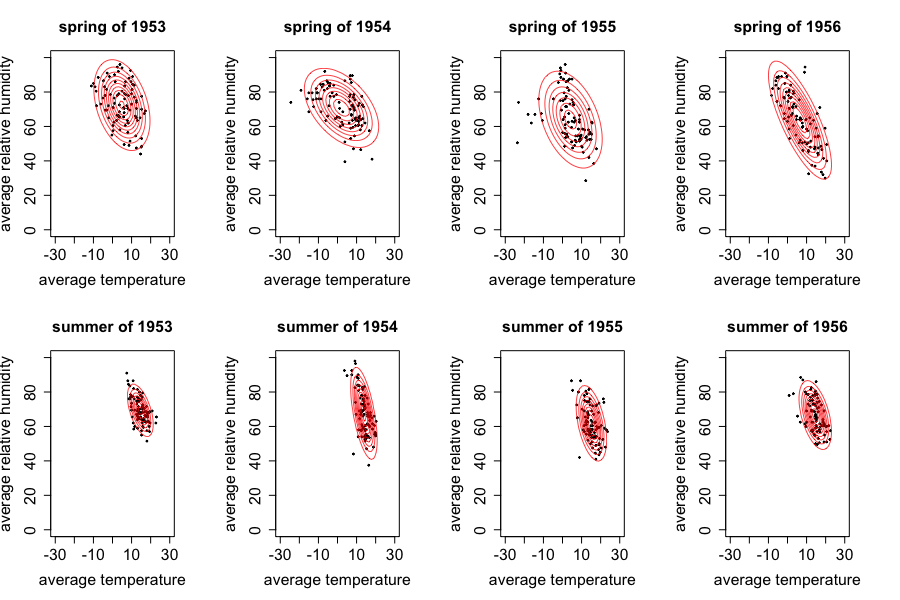}
    \caption{Observed data and estimated Gaussian joint densities of the average temperatures and average relative humidity in spring (top row) and summer (bottom row) from 1953 to 1956. 
    Black points are observed data and solid lines are contour lines of estimated densities.}
    \label{fig:real_data_example}
\end{figure}

We applied the proposed  \eqref{eq:model} and  alternative \eqref{eq:alt_model}  regression models 
with the distributions for spring as the predictor and summer as the response. Models are trained on data up to $1988$ and predictions are computed for the remaining period, where we predicted the distribution of summer based on that of spring for each year. 

Table \ref{tab:five_sum_prop} shows the fitting and prediction results of the proposed method for training and prediction periods. Additionally, Table \ref{tab:five_sum_alt} shows the result of the alternative method. In these tables, we report the summary of the Wasserstein discrepancies between observed and fitted distributions in training periods, and those between observed and predicted distributions in prediction periods. We also show the prediction results of both methods from 2017 to 2019 in Figure \ref{fig:real_data_pred}.
We find that fitting and prediction by the proposed model are generally better than those by the alternative model. This result can be explained by the fact that the proposed model takes into consideration the geometry of the Wasserstein space while the alternative model does not.

\begin{table}[htbp]
    \centering
     \begin{tabular}{lccccc} \hline
    & Min & $Q_{0.25}$ & Median & $Q_{0.75}$ & Max \\ \hline
   Training & 0.5725 & 1.7709 & 3.0337 & 4.5545 & 6.4389  \\
   Prediction & 1.708 & 2.748 & 3.991 & 5.606 & 12.401  \\ \hline
\end{tabular}
    \caption{Summary of the Wasserstein discrepancies for the proposed method in training and prediction periods.}
    \label{tab:five_sum_prop}
\end{table}

\begin{table}[htbp]
    \centering
    \begin{tabular}{lccccc} \hline
    & Min & $Q_{0.25}$ & Median & $Q_{0.75}$ & Max \\ \hline
   Training & 0.3086 & 2.3041 & 3.2879 & 4.7202 & 6.8268    \\
   Prediction & 1.317 & 3.610 & 5.409  & 7.306  & 10.513   \\ \hline
\end{tabular}
    \caption{Summary of the Wasserstein discrepancies for the alternative method in training and prediction periods.}
    \label{tab:five_sum_alt}
\end{table}

\begin{figure}[htbp]
    \centering
\includegraphics[width=140mm]{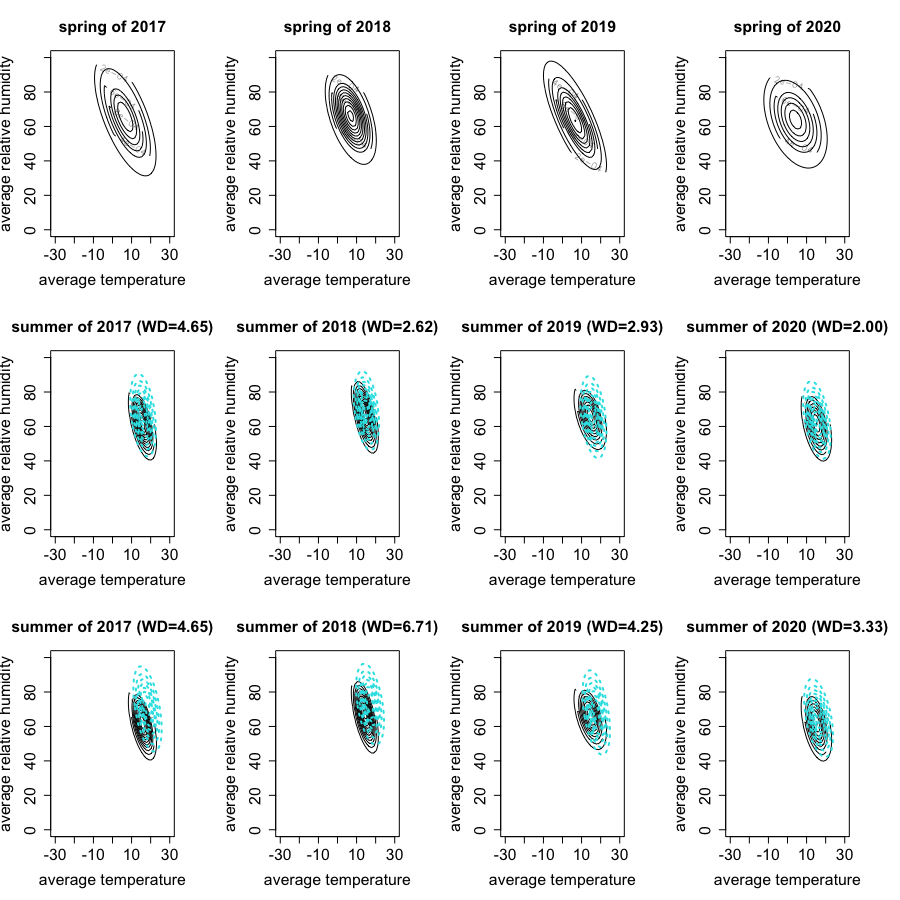}
    \caption{Observed and predicted (middle and bottom rows) densities of the average temperatures and average relative humidity in spring (top row) and summer (middle and bottom rows) from 2017 to 2020. Solid lines are contour lines of observed densities, and dashed lines (middle and bottom rows) are contour lines of predicted densities. 
    Predictions in the middle row are by the proposed method, while those in the bottom row are by the alternative method.
    In the middle and bottom rows, the Wasserstein discrepancies (WDs) between observed and predicted densities are also listed.}    \label{fig:real_data_pred}
\end{figure}

\section{Conclusion} \label{sec:conclusion} 
In this paper, we propose the distribution-on-distribution regression models for multivariate Gaussians with the Wasserstein metric. 
In the proposed regression models, Gaussian distributions are transformed into elements in linear matrix spaces by the proposed nearly isometric maps, and the regression problem comes down to matrix-on-matrix linear regression. 
It has the advantage that the distribution-on-distribution regression is reduced to a linear regression while keeping the properties of distributions.
Also, owing to the linear regression structure, we can easily implement and interpret the models.
We incorporate a low-rank structure in the parameter tensor to address large dimensional Gaussian distributions and also discuss the generalization of our models to the class of elliptically symmetric distributions.
In the simulation studies, we find that our models perform better than an alternative approach of transforming Gaussian distributions to matrices that do not consider the Wasserstein metric.

\appendix

\begin{center}
 {\bf \large Appendix}
 \end{center}

\section{Proofs}
\begin{proof}[Proof of  Proposition 1]
    Firstly, we set $a = m - S(\Sigma_\ast, \Sigma)m_\ast$ and 
    $V = S(\Sigma_\ast, \Sigma) - I$. Then, we have
    $
    a + Vm_\ast 
    =
    m - m_\ast
    $ and 
    \begin{gather*}
         V\Sigma_\ast V
    =
    \Sigma + \Sigma_\ast - \Sigma_\ast^{1/2}[\Sigma_\ast^{1/2} \Sigma \Sigma_\ast^{1/2}]^{1/2}\Sigma_\ast^{-1/2}
     - \Sigma_\ast^{-1/2}[\Sigma_\ast^{1/2} \Sigma \Sigma_\ast^{1/2}]^{1/2}\Sigma_\ast^{1/2}.
    \end{gather*}
    Therefore, $ \|\varphi_{\mu_\ast}\mu\|_{(m_\ast, \Sigma_\ast)}^2$ is expressed as
  \begin{align*}
      \|\varphi_{\mu_\ast}\mu\|_{(m_\ast, \Sigma_\ast)}^2
      &=\|a+Vm_\ast\|^2+\text{tr}(V\Sigma_\ast V) \\
    &=
    \|m - m_\ast\|^2
    +
    \text{tr}(\Sigma) + \text{tr}(\Sigma_\ast)
    -
    \text{tr}(\Sigma_\ast^{1/2}[\Sigma_\ast^{1/2} \Sigma \Sigma_\ast^{1/2}]^{1/2}\Sigma_\ast^{-1/2}) \\ 
    &\,\,\,\,\,\,\,\ - \text{tr}(\Sigma_\ast^{-1/2}[\Sigma_\ast^{1/2} \Sigma \Sigma_\ast^{1/2}]^{1/2}\Sigma_\ast^{1/2}) \\
    &=\|m - m_\ast\|^2
    + \text{tr}(\Sigma) + \text{tr}(\Sigma_\ast) 
    -2\text{tr}([\Sigma_\ast^{1/2} \Sigma \Sigma_\ast^{1/2}]^{1/2}) \\ 
    &=d_W^2(\mu, \mu_\ast).
  \end{align*}
  Next, let $U$ be a $d \times d$ orthogonal matrix and suppose 
  $\mu_\ast = N(m_\ast, \Sigma_\ast), \mu_1 = N(m_1, \Sigma_1)$ and $\mu_2 = N(m_2, \Sigma_2)$ are Gaussian measures in $\mathscr{C}_U$. 
  Because $\Sigma_1^{1/2}\Sigma_2^{1/2} = \Sigma_2^{1/2}\Sigma_1^{1/2}$ holds in this setting, the Wasserstein distance between $\mu_1$ and $\mu_2$ is expressed as 
  \begin{equation}
      d_W^2(\mu_1, \mu_2)
      =
     \|m_1 - m_2\|^2 + \text{tr}((\Sigma_1^{1/2} - \Sigma_2^{1/2})^2).
     \label{eq:diag_1}
  \end{equation}
  On the other hand, 
  because $\Sigma_\ast^{1/2} \Sigma_1^{1/2} = \Sigma_1^{1/2}\Sigma_\ast^{1/2}$ and 
  $\Sigma_\ast^{1/2} \Sigma_2^{1/2} = \Sigma_2^{1/2}\Sigma_\ast^{1/2}$ also hold in this setting, we have
  \begin{gather*}
      \varphi_{\mu_\ast}\mu_1 = (m_1 - \Sigma_1^{1/2}\Sigma_\ast^{-1/2}m_\ast, \Sigma_1^{1/2}\Sigma_\ast^{-1/2} - I), \\ 
    \varphi_{\mu_\ast}\mu_2 = (m_2 - \Sigma_2^{1/2}\Sigma_\ast^{-1/2}m_\ast, \Sigma_2^{1/2}\Sigma_\ast^{-1/2} - I).
  \end{gather*}
  This implies 
  \[
  \varphi_{\mu_\ast}\mu_1 - \varphi_{\mu_\ast}\mu_2 
  =
  (m_1 - m_2 - (\Sigma_1^{1/2} - \Sigma_2^{1/2})\Sigma_\ast^{-1/2}m_\ast, (\Sigma_1^{1/2} - \Sigma_2^{1/2})\Sigma_\ast^{-1/2}), 
  \]
  and we have
  \begin{equation}
      \| \varphi_{\mu_\ast}\mu_1 - \varphi_{\mu_\ast}\mu_2 \|_{(m_\ast, \Sigma_\ast)}^2
  =
  \|m_1 - m_2\|^2 + \text{tr}((\Sigma_1^{1/2} - \Sigma_2^{1/2})^2).
  \label{eq:diag_2}
  \end{equation}
 From \eqref{eq:diag_1} and \eqref{eq:diag_2}, we obtain
 $d_W(\mu_1, \mu_2)
=
\|\varphi_{\mu_\ast}\mu_1 - \varphi_{\mu_\ast}\mu_2 \|_{(m_\ast, \Sigma_\ast)}$.
\end{proof}

To prove Theorem 1 and 2, we employ the following general result regarding the in-sample prediction error of least squares regression, which is shown by \cite{park2023towards}. 
We refer to Section A.2 in \cite{park2023towards} for Gaussian random variables in Hilbert spaces.

\begin{thm}[\cite{park2023towards}, Section 4.1]
Let  $x_1, ..., x_n$ be fixed covariates taking values in a set $\mathcal{X}$, and let $Y_1, ..., Y_n$ be random variables taking values in a separable Hilbert space $(\mathcal{Y}, \|\cdot\|_\mathcal{Y})$ satisfying 
$
Y_i = g_0(x_i) + \varepsilon_i, i=1, ..., n.
$
Here, $\varepsilon_i$ are independent Gaussian noise terms with zero mean  and covariance trace $1$, and $g_0: \mathcal{X} \to \mathcal{Y}$ is an unknown function in a class $\mathcal{G}$. Let define the empirical norm
$\|g\|_n = \sqrt{n^{-1}\sum_{i=1}^n \|g(x_i)\|_\mathcal{Y}^2}$ for $g \in \mathcal{G}$, and define
$J(\delta) = \int_0^\delta \sqrt{\log N_n(t, \mathcal{B}_n(\delta; \mathcal{G}), \|\cdot\|_n)}dt$ for $\delta > 0$, where $N_n(t, \mathcal{B}_n(\delta; \mathcal{G}), \|\cdot\|_n)$ is the $t$-covering number of the ball
$\mathcal{B}_n(\delta; \mathcal{G}) = \{g \in \mathcal{G}: \|g\|_n \le \delta\}$. Then, if there exist real sequence $\{\delta_n\}$ and constant $C > 0$ such that $J(\delta_n) \le C \sqrt{n}\delta_n^2$, the least squares estimator $\hat{g}_n = \argmin_{g \in \mathcal{G}}n^{-1}\sum_{i=1}^n \|Y_i - g(x_i) \|_\mathcal{Y}^2$ satisfies 
$\|\hat{g}_n - g_0\|_n = O_P(\delta_n)$.
 \label{thm:pred_error_hilbert}
\end{thm}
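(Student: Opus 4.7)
The plan is a standard empirical-process proof: combine the basic inequality from the optimality of the least-squares fit with a Dudley-chaining bound on the noise-multiplier process, then use a peeling argument to handle the random localization radius.

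First, writing $\hat r := \|\hat g_n - g_0\|_n$ and expanding the inequality $\sum_i \|Y_i - \hat g_n(x_i)\|_{\mathcal{Y}}^2 \le \sum_i \|Y_i - g_0(x_i)\|_{\mathcal{Y}}^2$ with $Y_i = g_0(x_i) + \varepsilon_i$ yields the basic inequality
\[
\hat r^2 \le \frac{2}{n}\sum_{i=1}^n \langle \varepsilon_i, \hat g_n(x_i) - g_0(x_i)\rangle_{\mathcal{Y}} =: 2\, Z_n(\hat g_n - g_0).
\]
Since each $\varepsilon_i$ is a centered $\mathcal{Y}$-valued Gaussian with $\mathrm{tr}[\mathrm{Cov}(\varepsilon_i)] = 1$, for any deterministic $h, h' : \mathcal{X} \to \mathcal{Y}$ the Gaussian functional $\sqrt{n}\,(Z_n(h) - Z_n(h'))$ has variance at most $\|h - h'\|_n^2$ (using $\mathrm{Var}\langle \varepsilon_i, u\rangle \le \|\mathrm{Cov}(\varepsilon_i)\|_{\mathrm{op}}\|u\|^2 \le \mathrm{tr}[\mathrm{Cov}(\varepsilon_i)]\,\|u\|^2$), so $\sqrt{n}\,Z_n$ is a sub-Gaussian process on $(\mathcal{G} - g_0,\ \|\cdot\|_n)$.

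Next, by Dudley's entropy integral applied to this sub-Gaussian process localized to the ball $\mathcal{B}_n(\delta;\mathcal{G} - g_0) = \{g - g_0 : g \in \mathcal{G},\ \|g - g_0\|_n \le \delta\}$ (whose covering numbers coincide with $N_n(t, \mathcal{B}_n(\delta;\mathcal{G}), \|\cdot\|_n)$ by translation invariance),
\[
\mathbb{E}\!\left[\sup_{\|g - g_0\|_n \le \delta} |Z_n(g - g_0)|\right] \le \frac{C_1}{\sqrt{n}}\, J(\delta),
\]
and the Borell--TIS inequality promotes this to a tail bound: with probability at least $1 - e^{-t^2/2}$, the supremum is at most $C_1 J(\delta)/\sqrt{n} + t\delta/\sqrt{n}$.

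Finally, apply the peeling decomposition $\mathcal{A}_j := \{2^j\delta_n < \hat r \le 2^{j+1}\delta_n\}$ for integers $j \ge \log_2 M$. On $\mathcal{A}_j$ the basic inequality gives $(2^j\delta_n)^2 < 2\sup_{\|g - g_0\|_n \le 2^{j+1}\delta_n}|Z_n(g - g_0)|$, and the Dudley bound plus Gaussian concentration, combined with the hypothesis $J(\delta_n) \le C\sqrt{n}\delta_n^2$ and the standard monotonicity $J(\alpha\delta)/\alpha^2 \le J(\delta)$ for $\alpha \ge 1$ (valid because local covering numbers grow at most polynomially in the radius, so $J(\delta)/\delta^2$ is nonincreasing in the critical-radius regime), produce $\mathbb{P}(\mathcal{A}_j) \le 2\exp(-c\, 4^j\, n\delta_n^2)$. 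Summing geometrically over $j \ge \log_2 M$ yields $\mathbb{P}(\hat r > M\delta_n) \to 0$ as $M \to \infty$, which is exactly $\hat r = O_P(\delta_n)$. The main obstacle is justifying the scaling property of $J$ that makes the peeling sum geometric; when the entropy integral itself fails to enjoy it, one has to work instead with the local-complexity fixed point $\delta_n^2 \gtrsim \Psi_n(\delta_n)/\sqrt{n}$ for $\Psi_n(\delta) := \mathbb{E}[\sup_{\|g - g_0\|_n \le \delta}|Z_n(g - g_0)|]$, which is the formulation ultimately used in Park (2023). The remaining steps---basic inequality, sub-Gaussian identification, and Dudley chaining---are entirely routine.
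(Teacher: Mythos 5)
The paper contains no proof of this statement: Theorem \ref{thm:pred_error_hilbert} is imported verbatim from \cite{park2023towards} (Section 4.1) and is used purely as a black box in the proofs of Theorems \ref{thm:conv_rate_basic} and \ref{thm:conv_rate_low}, so there is no internal argument to compare yours against. Judged on its own merits, your sketch is the standard localization proof for nonparametric least squares --- basic inequality, Dudley chaining for the Gaussian multiplier process, Borell--TIS concentration, and peeling --- correctly adapted to Hilbert-space-valued responses via $\mathrm{Var}\langle\varepsilon_i,u\rangle_{\mathcal Y}\le\mathrm{tr}[\mathrm{Cov}(\varepsilon_i)]\,\|u\|_{\mathcal Y}^2$, and it is essentially sound. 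Two caveats. First, the scaling property you invoke, $J(\alpha\delta)\le\alpha^2 J(\delta)$ (equivalently, $J(\delta)/\delta^2$ nonincreasing), is false in general; what actually holds for star-shaped classes is that $\delta\mapsto J(\delta)/\delta$ is nonincreasing, and this weaker fact already suffices for the peeling to close, since $J(2^{j+1}\delta_n)\le 2^{j+1}J(\delta_n)\le 2^{j+1}C\sqrt{n}\,\delta_n^2$ and $2^{j+1}\le 4^{j}$ for $j\ge 1$, so the exponent $-c\,4^{j}n\delta_n^2$ still emerges. Second, replacing the covering numbers of $\{g-g_0:\ g\in\mathcal G,\ \|g-g_0\|_n\le\delta\}$ by $N_n(t,\mathcal B_n(\delta;\mathcal G),\|\cdot\|_n)$ is not a consequence of translation invariance alone; it requires $\mathcal G$ to be linear or star-shaped about $g_0$. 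That hypothesis is satisfied in the paper's application, where $\mathcal G$ is the linear space $\{\Gamma_{\mathbb B}:\mathbb B\in\mathcal B\}$ (or $\mathcal B_{\text{low}}$ together with its reflections), but it should be stated as an assumption of the theorem rather than derived. With those two repairs your outline is a complete proof of the quoted result, and your closing remark correctly identifies the fixed-point formulation $\delta_n^2\gtrsim\Psi_n(\delta_n)/\sqrt{n}$ as the form in which \cite{park2023towards} actually states the condition.
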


Using this result, we prove Theorem 1 and 2. 
Throughout the proofs, we denote $a \lesssim b$
when there exists a constant $C > 0$ not depending on $n, d_1, d_2, K$ such that $a \le Cb$.
\begin{proof}[Proof of Theorem 1]
Firstly we bound the in-sample prediction error regarding the map $\Gamma_{\mathbb{B}_0}$, which is defined by 
\begin{align}
    \|\Gamma_{\tilde{\mathbb{B}}} - \Gamma_{\mathbb{B}_0}\|_n
    =
    \sqrt{
    n^{-1}\sum_{i=1}^n \|\Gamma_{\tilde{\mathbb{B}}}(X_i) - \Gamma_{\mathbb{B}_0}(X_i) \|_{({m}_{2\oplus}, {\Sigma}_{2\oplus})}^2}.
\end{align}
    Our strategy is to bound the metric entropy of the function space ${\mathscr{F}} = \{\Gamma_{\mathbb{B}}: \mathbb{B} \in {\mathcal{B}} \}$ and employ Theorem \ref{thm:pred_error_hilbert}.
    We define the $\delta$-ball of space $\mathscr{F}$  as $\mathcal{B}_n(\delta; {\mathscr{F}}) = \{ \Gamma_{\mathbb{B}} \in {\mathscr{F}}: \|\Gamma_{\mathbb{B}}\|_n \le \delta\}$ and denote its $t$-covering number as 
    $N_n(t, \mathcal{B}_n(\delta; {\mathscr{F}}), \|\cdot\|_n)$.
    By defining 
       $\|\mathbb{B}\|' = \|\Gamma_{\mathbb{B}}\|_n$ for $\mathbb{B} \in {\mathcal{B}}$,
    the set $\mathcal{B}_n(\delta; \mathscr{{F}})$ is isometric to the $\delta$-ball within the space $({\mathcal{B}}, \|\cdot\|')$.
    Since the space $({\mathcal{B}}, \|\cdot\|')$  has dimension $d_1(d_1+1)d_2(d_2+3)/2$, by a volume ratio argument (Example 5.8 in \cite{wainwright2019high}), we have
    \[
    \log N_n(t, \mathcal{B}_n(\delta; {\mathscr{F}}), \|\cdot\|_n)
    \lesssim d_1^2d_2^2 \log \left(1 + \frac{2\delta}{t}\right). 
    \]
    Using this upper bound, we have
    \begin{align*}
    \int_0^\delta \sqrt{\log N_n(t, \mathcal{B}_n(\delta; {\mathscr{F}}), \|\cdot\|_n)}dt 
    &\lesssim d_1d_2
    \int_0^\delta \sqrt{ \log\left(1 + \frac{2\delta}{t}\right)}dt  \\
    &= \delta d_1d_2 \int_0^1 \sqrt{\log\left(1 + \frac{2}{u}\right)}du \quad (u=t/\delta)  \\
    &\lesssim \delta d_1d_2.
    \end{align*}
This implies we can apply Theorem \ref{thm:pred_error_hilbert} with $\delta_n = d_1d_2/\sqrt{n}$ and obtain $\|\Gamma_{\tilde{\mathbb{B}}} - \Gamma_{\mathbb{B}_0}\|_n = O_P(d_1d_2/\sqrt{n})$.

Next, we bound the in-sample prediction error $\mathcal{R}_n({\Gamma}_{\mathcal{G}, \tilde{\mathbb{B}}}, {\Gamma}_{\mathcal{G}, {\mathbb{B}}_0})$. 
Because the Wasserstein space has nonnegative sectional curvature at any reference measure (e.g., Section 2.3.2 in \cite{panaretos2020invitation}), the Gaussian space, which is the restriction of the Wasserstein space to Gaussian measures, also has this property. 
In other words, the inequality
\[
d_W(\mu_1, \mu_2) \le \|\varphi_{\nu_{2\oplus}}\mu_1 - \varphi_{\nu_{2\oplus}}\mu_2\|_{(m_{2\oplus}, \Sigma_{2\oplus})}
\]
holds for any $\mu_1, \mu_2 \in \mathcal{G}(\mathbb{R}^{d_2})$.
This implies  $\mathcal{R}_n({\Gamma}_{\mathcal{G}, \tilde{\mathbb{B}}}, {\Gamma}_{\mathcal{G}, {\mathbb{B}}_0}) \le \|\Gamma_{\tilde{\mathbb{B}}} - \Gamma_{\mathbb{B}_0}\|_n$ holds, and combining this fact with $\|\Gamma_{\tilde{\mathbb{B}}} - \Gamma_{\mathbb{B}_0}\|_n = O_P(d_1d_2/\sqrt{n})$, we have $\mathcal{R}_n({\Gamma}_{\mathcal{G}, \tilde{\mathbb{B}}}, {\Gamma}_{\mathcal{G}, {\mathbb{B}}_0}) = O_P(d_1d_2/\sqrt{n})$.
\end{proof}

\begin{proof}[Proof of  Theorem 2]

As with the proof of Theorem 1, we firstly bound the in-sample prediction error regarding the map $\Gamma_{\mathbb{B}_0}$. We define the function space as ${\mathscr{F}}_{\text{low}} = \{\Gamma_{\mathbb{B}}: \mathbb{B} \in {\mathcal{B}}_{\text{low}} \}$, define its $\delta$- ball as $\mathcal{B}_n(\delta; {\mathscr{F}}_{\text{low}}) = \{ \Gamma_{\mathbb{B}} \in {\mathscr{F}}_{\text{low}}: \|\Gamma_{\mathbb{B}}\|_n \le \delta\}$ , and denote its $t$-covering number as 
$N_n(t, \mathcal{B}_n(\delta; {\mathscr{F}}_{\text{low}}), \|\cdot\|_n)$.
By defining 
       $\|\mathbb{B}\|'' = \|\Gamma_{\mathbb{B}}\|_n$ for $\mathbb{B} \in {\mathcal{B}}_{\text{low}}$,
    the set $\mathcal{B}_n(\delta; \mathscr{{F}}_{\text{low}})$ is isometric to the $\delta$-ball within the space $({\mathcal{B}}_{\text{low}}, \|\cdot\|'')$.
Recall that if a tensor $\mathbb{B}$ is in $\mathcal{B}_{\text{low}}$,
there exist matrices $A_1 \in \mathbb{R}^{d_1\times K}, A_2 \in \mathbb{R}^{(d+1)\times K}, A_3 \in \mathbb{R}^{d_2 \times K}, A_4 \in \mathbb{R}^{(d_2+1)\times K}$ such that $\mathbb{B} = (\mathbb{A} + \mathbb{A}^\ast)/2$ with $\mathbb{A} = \llbracket A_1, A_2, A_3, A_4 \rrbracket$.
Let consider an corresponding from $\mathbb{R}^{2K(d_1+d_2+1)}$ to $\mathcal{B}_{\text{low}}$ such that 
\[
(\text{vec}(A_1), \text{vec}(A_2), \text{vec}(A_3), \text{vec}(A_4)) \mapsto 
(\mathbb{A} + \mathbb{A}^\ast)/2, 
\]
where $\mathbb{A} = \llbracket A_1, A_2, A_3, A_4 \rrbracket$.
Moreover, let define
\[
\|(\text{vec}(A_1), \text{vec}(A_2), \text{vec}(A_3), \text{vec}(A_4))\|'''
=
\|(\mathbb{A} + \mathbb{A}^\ast)/2\|''.
\]
Since the $\delta$-ball within the space 
$(\mathcal{B}_{\text{low}}, \|\cdot\|'')$ is isometric to the $\delta$-ball within
$(\mathbb{R}^{2K(d_1+d_2+1)}, \|\cdot\|''')$, we eventually have that the set $\mathcal{B}_n(\delta; \mathscr{{F}}_{\text{low}})$ is isometric to the $\delta$-ball within the space $(\mathbb{R}^{2K(d_1+d_2+1)}, \|\cdot\|''')$. 
Therefore, by a volume ratio argument, we have
    \[
    \log N_n(t, \mathcal{B}_n(\delta; {\mathscr{F}}_{\text{low}}), \|\cdot\|_n)
    \lesssim K(d_1+d_2) \log \left(1 + \frac{2\delta}{t}\right). 
    \]
Using this upper bound, as with the proof of Theorem 1, we have 
\[
\int_0^\delta \sqrt{\log N_n(t, \mathcal{B}_n(\delta; {\mathscr{F}}), \|\cdot\|_n)}dt
\lesssim \delta \sqrt{K(d_1+d_2)}.
\]
This implies
we can apply Theorem \ref{thm:pred_error_hilbert} with $\delta_n = \sqrt{K(d_1+d_2)}/\sqrt{n}$ and obtain $\|\Gamma_{\tilde{\mathbb{B}}} - \Gamma_{\mathbb{B}_0}\|_n = O_P(\sqrt{K(d_1+d_2)}/\sqrt{n})$.

As with the proof of Theorem 1, the nonnegativity of sectional curvature of the Wasserstien space implies $\mathcal{R}_n({\Gamma}_{\mathcal{G}, \tilde{\mathbb{B}}}, {\Gamma}_{\mathcal{G}, {\mathbb{B}}_0}) \le \|\Gamma_{\tilde{\mathbb{B}}} - \Gamma_{\mathbb{B}_0}\|_n$.
Combing this fact with $\|\Gamma_{\tilde{\mathbb{B}}} - \Gamma_{\mathbb{B}_0}\|_n = O_P(\sqrt{K(d_1+d_2)}/\sqrt{n})$, we obtain $\mathcal{R}_n({\Gamma}_{\mathcal{G}, \tilde{\mathbb{B}}}, {\Gamma}_{\mathcal{G}, {\mathbb{B}}_0}) = O_P(\sqrt{K(d_1+d_2)}/\sqrt{n})$.

\end{proof}

\section{Parameter Identification}
In this section, we deal with the identification of regression parameter $\mathbb{B}$ in our proposed models. Although the parameter $\mathbb{B}$ does not need to be identified in the empirical risk minimization problems in the main article, it must be identified when we consider estimation or inference for the regression parameter.

\subsection{Basic Model}
Recall that assuming linear regression model \eqref{eq:model} is equivalent to assuming the model \eqref{eq:model_2} for each $1 \le r \le d_2$ and $1 \le s \le d_2+1$. Let fix indexes $1 \le r \le d_2$ and $1 \le s \le d_2+1$ and consider the identification of parameter $\mathbb{B}[\cdot, \cdot, r, s] \in \mathbb{R}^{d_1 \times (d_1+1)}$
in \eqref{eq:model_2}.
In order to deal with the identifiability issue coming from the symmetry in the matrix $X \in \Xi_{d_1}$, we impose the following condition on the parameter $\mathbb{B}[\cdot, \cdot, r, s]$: 
\begin{align}
    \mathbb{B}[p, q, r, s]
    =
    0, \quad \text{for} \,\ 1 \le p \le d_1, p+2 \le q \le d_2+1.
    \label{eq:ident_cond_basic}
\end{align}
In other words, the matrix $ \mathbb{B}[\cdot, \cdot, r, s]$
has a lower triangular form 
\begin{align}
    \begin{pmatrix} 
  \ast   & \vdots  & \ast &            &  {\Huge O}      & \\ 
  \vdots & \vdots  &      &     \ddots &        & \\
  \vdots & \vdots  &      &            & \ddots & \\
  \ast   & \vdots  &      &    {\Huge {\text{$\ast$}}}          &        & \ast
  \label{eq:mat_form_tri}
\end{pmatrix}, 
\end{align}
where $\ast$ is some real number.
If two matrices $\mathbb{B}[\cdot, \cdot, r, s]$ and 
$\mathbb{B}'[\cdot, \cdot, r, s]$ satisfy the condition \eqref{eq:ident_cond_basic}, we have
\[
\langle X, \mathbb{B}[\cdot, \cdot, r, s]\rangle = \langle X, \mathbb{B}'[\cdot, \cdot, r, s]\rangle
\,\ \text{for any $X \in \Xi_{d_1}$} \implies 
\mathbb{B}[\cdot, \cdot, r, s]
=
\mathbb{B}'[\cdot, \cdot, r, s],
\]
which guarantees the identifiability of the parameter 
 $\mathbb{B}[\cdot, \cdot, r, s]$.
 
In summary, by adding condition \eqref{eq:ident_cond_basic} to the existing  parameter space,  we define the following modified parameter space for the basic model :
\begin{align}
    \mathcal{B}^\ast = \{\mathbb{B} \in \mathcal{B}: \text{the condition \eqref{eq:ident_cond_basic} holds for each $1 \le r \le d_2$ and $1 \le s \le d_2+1$}\}.
    \label{eq:modif_basic}
\end{align}
Then, the parameter $\mathbb{B}$ is uniquely identified in $\mathcal{B}^\ast$.

\subsection{Low-Rank Model}
Next, we consider the identification of regression parameters in the low-rank model.
Let $\mathbb{B}$ has the form $\mathbb{B} = (\mathbb{A} + \mathbb{A}^\ast)/2$ and $\mathbb{A}$ admit the rank-$K$ decomposition (\ref{eq:decomposition}). Note that $\mathbb{A}^\ast$ is expressed as
\[
\mathbb{A}^\ast
=
\sum_{k=1}^K a_1^{(k)} \circ a_2^{(k)} \circ a_3 ^{(k)} \circ b^{(k)} 
+
\sum_{k=1}^K a_1^{(k)} \circ a_2^{(k)} \circ c^{(k)} \circ d^{(k)}, 
\]
where 
\[
b^{(k)}
=
\begin{pmatrix}a_4^{(k)}[1] \\ 0 \\ \vdots \\ 0 \end{pmatrix}, 
c^{(k)}
=
\begin{pmatrix}a_4^{(k)}[2] \\ a_4^{(k)}[3] \\ \vdots \\ a_4^{(k)}[d_2+1] \end{pmatrix}, 
d^{(k)}
=
\begin{pmatrix}0 \\ a_3^{(k)}[1] \\ \vdots \\ a_3^{(k)}[d_2] \end{pmatrix}.
\]
Therefore, we have 
\begin{align*}
    \mathbb{B}
    &=
    \sum_{k=1}^K a_1^{(k)} \circ a_2^{(k)} \circ a_3 ^{(k)} \circ a_4^{(k)}/2
    +
    \sum_{k=1}^K a_1^{(k)} \circ a_2^{(k)} \circ a_3 ^{(k)} \circ b^{(k)}/2
    +
    \sum_{k=1}^K a_1^{(k)} \circ a_2^{(k)} \circ c ^{(k)} \circ d^{(k)}/2 \\
    &=
    \sum_{k=1}^K a_1^{(k)} \circ a_2^{(k)} \circ a_3 ^{(k)} \circ (a_4^{(k)}+b^{(k)})/2 
    +
    \sum_{k=1}^K a_1^{(k)} \circ a_2^{(k)} \circ c ^{(k)} \circ d^{(k)}/2, 
\end{align*}
which means $\mathbb{B}$ admits the rank-$2K$ decomposition. Let define matrices $B_1 \in \mathbb{R}^{d_1 \times (2K)}, B_2 \in \mathbb{R}^{(d_1+1) \times (2K)}, B_3 \in \mathbb{R}^{d_2 \times (2K)}, B_4 \in \mathbb{R}^{(d_2+1)^\times (2K)}$ as
$
    B_1 = [a_1^{(1)}, ..., a_1^{(K)}, a_1^{(1)}, ..., a_1^{(K)}], 
    B_2 = [a_2^{(1)}, ..., a_2^{(K)}, a_2^{(1)}, ..., a_2^{(K)}], 
    B_3 = [a_3^{(1)}, ..., a_3^{(K)}, c^{(1)}, ..., c^{(K)}], B_4 = [(a_4^{(1)}+b^{(1)})/2, ..., (a_4^{(K)}+b^{(K)})/2, d^{(1)}/2, ..., d^{(K)}/2]
$. 
Then, we have $\mathbb{B} = \llbracket B_1, B_2, B_3, B_4 \rrbracket$. 
Following an identification strategy used in \cite{zhou2013tensor} for tensor regression models,
 we adopt
 the following specific constrained parametrization 
 to fix the scaling and permutation indeterminacy of the tensor decomposition.
\begin{itemize}
    \item To fix the scaling indeterminacy, we assume 
     \begin{align}
         a_1^{(k)}[1] = a_2^{(k)}[1] = a_3^{(k)}[1] = a_4^{(k)}[1] = 1, \quad 1 \le k \le K. \label{eq:scale_cond}
     \end{align}
     In other words, the first rows of  $B_1, B_2, B_3$ are ones. This scaling of $B_1, B_2, B_3$ determines the first row of $B_4$ and fixes scaling indeterminacy (Section 4.2 in \cite{zhou2013tensor}).
     \item To fix the permutation indeterminacy, we assume that the last row elements of $B_4$ are distinct and arranged in the descending order 
     \begin{align}
         a_4^{(1)}[d_2+1] > \cdots > a_4^{(K)}[d_2+1] > a_3^{(1)}[d_2] > \cdots a_3^{(K)}[d_2].
          \label{eq:permutation_cond}
     \end{align}
      This fixes permutation indeterminacy (Section 4.2 in \cite{zhou2013tensor}).
\end{itemize}
 Adding these constraints to the existing parameter space, 
 we define the modified parameter space for the rank-$K$ model as
 \begin{align}
     \mathcal{B}_{\text{low}}^\ast
     =
     \{\mathbb{B} = (\mathbb{A} + &\mathbb{A}^\ast)/2 \in \mathcal{B}_{\text{low}}: \mathbb{A} = \llbracket A_1, A_2, A_3, A_4 \rrbracket,  \notag \\
     & A_1, A_2, A_3, A_4 \,\ \text{satisfy the conditions}\, \eqref{eq:scale_cond}, \eqref{eq:permutation_cond}\}.
     \label{eq:modif:low}
\end{align}
If the tensor $\mathbb{B} = (\mathbb{A} + \mathbb{A}^\ast)/2 \in \mathcal{B}_{\text{low}}^\ast$ satisfies the condition 
\begin{align}
     \text{rank} B_1 + 
     \text{rank} B_2 + 
     \text{rank} B_3 + 
     \text{rank} B_4 \ge 4K+3, 
     \label{rank_cond}
 \end{align}
 then Proposition 3 in \cite{zhou2013tensor} implies that  $\mathbb{B}$ is uniquely identified in $\mathcal{B}_{\text{low}}^\ast$.

\section{Consistency and Asymptotic Normality of Estimators}
In this section, we study the asymptotic property of estimators for the regression parameter in the basic model.
Let $\{(\nu_{i1}, \nu_{i2})\}_{i=1}^n$ be independent realization of the pair of Gaussian distributions $(\nu_1, \nu_2)$ from the basic model. For simplicity, we assume the true Fr\'echet means $\nu_{1\oplus}, \nu_{2\oplus}$ are known and distributions $\{(\nu_{1i}, \nu_{2i})\}_{i=1}^n$ are fully observed. 

We set $X_i = \varphi_{\nu_{1\oplus}}\nu_{1i}, Y_i = \varphi_{\nu_{2\oplus}}\nu_{2i}$ and define an estimator as 
$
\tilde{\mathbb{B}}_n = 
\argmin_{\mathbb{B} \in \mathcal{B}^\ast} \sum_{i=1}^n \|{Y}_i - \langle {X}_i, \mathbb{B} \rangle \|_{(m_{2\oplus}, \Sigma_{2\oplus})}^2 
$. 
Here, $\mathcal{B}^\ast$ is the modified parameter space defined by \eqref{eq:modif_basic}.

In order to state our results, we introduce a half-vectorization of tensor $\mathbb{B}$ in $\mathcal{B}^\ast$. For a matrix $A \in \mathbb{R}^{d \times (d+1)}$ , we define its vectorization $\mathrm{vech}^\ast(A) \in \mathbb{R}^{d(d+3)/2}$ as 
\begin{align*}
    \mathrm{vech}^\ast(A) 
=
&(A[1, 1], A[2, 1], \cdots ,A[d, 1], A[1, 2], A[2, 2], \cdots A[d, 2], \\
&A[2, 3], \cdots ,A[d, 3], A[3, 4], \cdots ,A[d,4], \cdots ,A[d-1, d], A[d, d], A[d, d+1]).
\end{align*}
Furthermore, for a tensor $\mathbb{B} \in \mathcal{B}^\ast$, we define its vectorization $\text{vec}^\ast(\mathbb{B}) \in \mathbb{R}^{d_1(d_1+1)d_2(d_2+1)/4}$ as 
\[
\text{vec}^\ast(\mathbb{B})
=
(  (\mathrm{vech}^\ast(\mathbb{B}[\cdot, \cdot, r, s])^\top)_{1 \le r \le d_2,r+2 \le s \le d_2+1})^\top.
\]
Note that the $\text{vec}^\ast(\cdot)$ 
operator is a one-to-one correspondence between $\mathcal{B}^\ast$ and $\mathbb{R}^{d_1(d_1+1)d_2(d_2+1)/4}$. 
Therefore, for any $\theta \in \mathbb{R}^{d_1(d_1+1)d_2(d_2+1)/4}$, 
there uniquely exists a tensor $\mathbb{B} \in \mathcal{B}^\ast$ such that $\text{vec}^\ast(\mathbb{B}) = \theta$.
We denote this tensor $\mathbb{B}$ as $\mathbb{B}(\theta)$.

Under this vectorization, we denote 
$\tilde{\theta}_n = \text{vec}^\ast(\tilde{\mathbb{B}}_n)$ and  $\theta_0 = \text{vec}^\ast({\mathbb{B}}_0)$, and analyze the asymptotic property of the estimator $\tilde{\theta}_n$ with the standard theory for M-estimation. 
For vector $\theta \in \mathbb{R}^{d_1(d_1+1)d_2(d_2+1)/4}$ and 
matrices $X \in \Xi_{d_1}, Y \in \Xi_{d_2}$, 
we define 
\[
m_{\theta}(\mathrm{vech}^\ast(X), \mathrm{vech}^\ast(Y)) = \|\mathrm{vech}^\ast(Y) - \mathrm{vech}^\ast(\langle X, \mathbb{B}(\theta) \rangle_2)\|_{(m_{2\oplus}, \Sigma_{2\oplus})}^2.
\]
Here, for a vector $z \in \mathbb{R}^{d_2(d_2+3)/2}$ represented as $z = \mathrm{vech}^\ast(A)$ with a matrix $A \in \mathbb{R}^{d_2(d_2+3)}$, we 
define its norm as $\|z\|_{m_{2\oplus}, \Sigma_{2\oplus}} = \|A\|_{m_{2\oplus}, \Sigma_{2\oplus}}$.
Then, the estimator $\tilde{\theta}_n$ is characterized as the minimizer of the criterion function $\theta \mapsto n^{-1}\sum_{i=1}^n m_{\theta}(\mathrm{vech}^\ast(X_i), \mathrm{vech}^\ast(Y_i))$. Note that 
the vector
$\mathrm{vech}^\ast(\langle X, \mathbb{B}(\theta) \rangle_2) \in \mathbb{R}^{d_2(d_2+3)/2}$ has the form
\[
\mathrm{vech}^\ast(\langle X, \mathbb{B}(\theta) \rangle_2)
=
(\langle \mathrm{vech}^\ast(X), \mathrm{vech}^\ast(\mathbb{B}(\theta)[\cdot, \cdot, r, s])\rangle)_{1 \le r \le d_2, r+2 \le s \le d_2+1},
\]
which implies $\tilde{\theta}_n$ is the least-square estimator in the linear regression model between vectors
$\mathrm{vech}^\ast(X)$ and  $\mathrm{vech}^\ast(Y)$.

Then, we obtain the following results. We denote the partial derivative of the function  $m_\theta$  in terms of $\theta$ as $\nabla_\theta m_\theta$.

\begin{thm}[Consistency of Estimator]
Assume $\theta_0$ is in a compact parameter space  $\Theta_0 \subset \mathbb{R}^{d_1(d_1+1)d_2(d_2+1)/4}$ and the pair of vectors $(\mathrm{vech}^\ast(X_i), \mathrm{vech}^\ast(Y_i))$ is supported on a bounded set. Then,
$\tilde{\theta}_n$ is a consistent estimator for $\theta_0$.
\label{thm:consist}
\end{thm}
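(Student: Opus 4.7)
The plan is to treat $\tilde\theta_n$ as a standard M-estimator and apply a consistency theorem along the lines of van der Vaart (1998, Theorem~5.7). Define the population and empirical criteria
\[
M(\theta) = \mathbb{E}[m_\theta(\mathrm{vech}^\ast(X), \mathrm{vech}^\ast(Y))], \quad M_n(\theta) = \frac{1}{n}\sum_{i=1}^n m_\theta(\mathrm{vech}^\ast(X_i), \mathrm{vech}^\ast(Y_i)),
\]
so that $\tilde\theta_n \in \argmin_{\theta \in \Theta_0} M_n(\theta)$. It then suffices to establish two things: (a) uniform convergence $\sup_{\theta \in \Theta_0}|M_n(\theta) - M(\theta)| \to 0$ in probability, and (b) $\theta_0$ is a well-separated minimizer of $M$ on $\Theta_0$.

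For step (a), I would exploit the explicit structure of $m_\theta$: when written through the vectorization, $m_\theta(\mathrm{vech}^\ast(X),\mathrm{vech}^\ast(Y))$ is a quadratic polynomial in $\theta$ with coefficients that are polynomials in $(\mathrm{vech}^\ast(X),\mathrm{vech}^\ast(Y))$. Since $(X,Y)$ is supported on a bounded set and $\Theta_0$ is compact, the family $\{m_\theta : \theta \in \Theta_0\}$ is uniformly bounded by an integrable envelope and is jointly continuous in $(\theta,X,Y)$. A standard uniform law of large numbers for continuous, uniformly bounded families indexed by a compact set (e.g.\ van der Vaart, Example~19.8) then yields uniform convergence.

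For step (b), the model assumption $\mathrm{vech}^\ast(Y)=\mathrm{vech}^\ast(\langle X,\mathbb{B}(\theta_0)\rangle_2)+\mathrm{vech}^\ast(E)$ with $\mathbb{E}[E\mid X]=0$ gives the decomposition
\[
M(\theta) - M(\theta_0) = \mathbb{E}\!\left[\left\|\mathrm{vech}^\ast\!\left(\langle X, \mathbb{B}(\theta) - \mathbb{B}(\theta_0)\rangle_2\right)\right\|^2_{(m_{2\oplus},\Sigma_{2\oplus})}\right] \ge 0,
\]
with equality at $\theta_0$. The identifiability argument in Section on Parameter Identification shows that the restriction to $\mathcal{B}^\ast$ (the lower-triangular constraint \eqref{eq:mat_form_tri}) ensures that $\langle X,\mathbb{B}\rangle_2 = 0$ almost surely implies $\mathbb{B}=0$, provided the law of $X$ is non-degenerate in the sense that the population Gram matrix $\mathbb{E}[\mathrm{vech}^\ast(X)\mathrm{vech}^\ast(X)^\top]$ is positive definite. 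Hence $M$ is uniquely minimized at $\theta_0$. Combined with continuity of $M$ (inherited from the quadratic form) and compactness of $\Theta_0$, this upgrades uniqueness to well-separation: for any $\varepsilon>0$, $\inf_{\|\theta-\theta_0\|\ge \varepsilon}M(\theta) > M(\theta_0)$. Applying the standard M-estimator consistency theorem to (a) and (b) then gives $\tilde\theta_n \xrightarrow{P} \theta_0$.

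The main obstacle is the identification step (b), specifically verifying the strict positive-definiteness of the population design on $\mathcal{B}^\ast$. The lower-triangular form \eqref{eq:mat_form_tri} was engineered precisely to eliminate the redundancy caused by the symmetry of $X\in\Xi_{d_1}$, so what remains is a mild non-degeneracy requirement on the distribution of $\nu_1$; I would state this implicitly within the boundedness/compactness assumptions, since for the Gaussian space with generic $\nu_{1\oplus}$ it is automatic. The uniform convergence step is routine once data boundedness is invoked.
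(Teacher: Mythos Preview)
Your proposal is correct and follows essentially the same route as the paper: both verify that $\{m_\theta:\theta\in\Theta_0\}$ is Glivenko--Cantelli via continuity, uniform boundedness (from bounded support and compact $\Theta_0$), and Example~19.8 in van der Vaart, then invoke Theorem~5.7. You are in fact more thorough than the paper's own proof, which checks only the uniform convergence step (a) and simply asserts that consistency follows from Theorem~5.7, without explicitly verifying the well-separation condition (b) or the implicit non-degeneracy of $\mathbb{E}[\mathrm{vech}^\ast(X)\mathrm{vech}^\ast(X)^\top]$ that you correctly flag.
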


\begin{proof}
    We show that the set of functions $\{m_\theta: \theta \in \Theta_0\}$ is a Glivenko-Cantelli class (Section 19 in \cite{van2000asymptotic}). 
    If this holds, the consistency of the estimator $\tilde{\theta}_n$ follows from Theorem 5.7 in \cite{van2000asymptotic}.
    Note that for a vector $z = (z_1, ..., z_{d_2(d_2+3)/2}) \in \mathbb{R}^{d_2(d_2+3)/2}$, the norm  $\|z\|_{m_{2\oplus},\Sigma_{2\oplus}}$ has the form
    \begin{equation}
        \|z\|_{m_{2\oplus}, \Sigma_{2\oplus}}^2
    =
    \sum_{1 \le i \le j \le d_2(d_2+3)/2}
    c_{ij}z_iz_j, 
    \label{eq:norm_form}
    \end{equation}
    where $c_{ij}$ are  constants determined by the values of $m_{2\oplus}$ and $\Sigma_{2\oplus}$. 
    This implies 
    that the map $\theta \mapsto m_\theta(\mathrm{vech}^\ast(X), \mathrm{vech}^\ast(Y))$ is continuous for each fixed $\mathrm{vech}^\ast(X)$ and 
    $\mathrm{vech}^\ast(Y)$. Moreover,
    because the parameter $\theta$ and 
    vectors $\mathrm{vech}^\ast(X)$ and 
    $\mathrm{vech}^\ast(Y)$ are in bounded regions,
    the map $m_\theta$ is also uniformly bounded. 
    That is, there exists a constant $C > 0$ such that 
    $m_\theta(\mathrm{vech}^\ast(X), \mathrm{vech}^\ast(Y)) \le C$ for all $\theta \in \Theta_0, \mathrm{vech}^\ast(X), \mathrm{vech}^\ast(Y)$.
    This implies the set of functions 
    $\{m_\theta: \theta \in \Theta_0 \}$ is dominated by the integrable constant function $C$. 
    Combining these facts with the 
    assumption of compactness of $\Theta_0$, 
    Example 19.8 in \cite{van2000asymptotic} implies that $\{m_\theta: \theta \in \Theta_0\}$ is a Glivenko-Cantelli class.
\end{proof}

\begin{thm}[Asymptotic Normality of Estimator]
In addition to the assumptions in Theorem \ref{thm:consist}, suppose $\theta_0$ is an interior point of $\Theta_0$ and the map $\theta \mapsto \mathbb{E}[m_\theta(\mathrm{vech}^\ast(X_i), \mathrm{vech}^\ast(Y_i))]$ has nonsingular Hessian matrix $V_{\theta_0}$ at $\theta_0$.
Then, $\sqrt{n}(\tilde{\theta}_n - \theta_0) $ converges in distribution to a normal distribution with mean zero and covariance matrix 
\[
V_{\theta_0}^{-1} \mathbb{E}[\nabla_\theta m_{\theta_0}(\mathrm{vech}^\ast(X_i), \mathrm{vech}^\ast(Y_i))  \nabla_\theta m_{\theta_0}(\mathrm{vech}^\ast(X_i), \mathrm{vech}^\ast(Y_i))^\top]V_{\theta_0}^{-1}. 
\]
\label{thm:asymnorm}
\end{thm}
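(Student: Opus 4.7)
The plan is to apply the classical M-estimator asymptotic normality theorem (Theorem 5.23 of \cite{van2000asymptotic}), exploiting two structural features of our setting: (i) the criterion $m_\theta$ is a quadratic function of $\theta$, and (ii) the support of $(\mathrm{vech}^\ast(X_i),\mathrm{vech}^\ast(Y_i))$ is bounded. Because $m_\theta(z_1,z_2)=\|z_2-L_\theta(z_1)\|_{(m_{2\oplus},\Sigma_{2\oplus})}^2$ with $L_\theta(z_1)$ linear in $\theta$, the gradient $\nabla_\theta m_\theta$ is affine in $\theta$ and the Hessian $\nabla_\theta^2 m_\theta$ does not depend on $\theta$. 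On the bounded support, both the gradient and Hessian are therefore uniformly bounded, which immediately verifies the local Lipschitz hypothesis needed by the M-estimator theorem with a square-integrable (in fact bounded) envelope $\dot m$.

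First, I would record the first-order condition. Since Theorem~\ref{thm:consist} gives $\tilde\theta_n\to\theta_0$ in probability and $\theta_0$ lies in the interior of $\Theta_0$, with probability tending to one $\tilde\theta_n$ satisfies $n^{-1}\sum_i\nabla_\theta m_{\tilde\theta_n}(\mathrm{vech}^\ast(X_i),\mathrm{vech}^\ast(Y_i))=0$. Second, I would compute $\nabla_\theta m_{\theta_0}$ explicitly: it is a bounded linear function of the residual $Y_i-\langle X_i,\mathbb{B}_0\rangle_2=E_i$, so the conditional mean $\mathbb{E}[\nabla_\theta m_{\theta_0}\mid X_i]=0$ follows from $\mathbb{E}[E_i\mid X_i]=0$. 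Consequently the classical multivariate CLT gives
\begin{equation*}
\frac{1}{\sqrt n}\sum_{i=1}^n \nabla_\theta m_{\theta_0}(\mathrm{vech}^\ast(X_i),\mathrm{vech}^\ast(Y_i))\overset{d}{\to} N(0,\Sigma_{\theta_0}),
\end{equation*}
where $\Sigma_{\theta_0}=\mathbb{E}[\nabla_\theta m_{\theta_0}\,\nabla_\theta m_{\theta_0}^\top]$.

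Third, I would exploit the quadratic structure: the Taylor expansion of $\theta\mapsto n^{-1}\sum_i\nabla_\theta m_\theta(\cdot)$ around $\theta_0$ is \emph{exact}, so
\begin{equation*}
0=\frac{1}{n}\sum_{i=1}^n\nabla_\theta m_{\theta_0}(\mathrm{vech}^\ast(X_i),\mathrm{vech}^\ast(Y_i))+\left(\frac{1}{n}\sum_{i=1}^n\nabla_\theta^2 m_{\theta_0}(\mathrm{vech}^\ast(X_i),\mathrm{vech}^\ast(Y_i))\right)(\tilde\theta_n-\theta_0),
\end{equation*}
holds on the event that $\tilde\theta_n$ is interior. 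The Hessian summands are bounded, so the strong law of large numbers gives $n^{-1}\sum_i\nabla_\theta^2 m_{\theta_0}(\cdot)\to V_{\theta_0}$ almost surely, and $V_{\theta_0}$ is nonsingular by hypothesis. Solving for $\sqrt n(\tilde\theta_n-\theta_0)$ and invoking Slutsky's theorem yields the desired sandwich formula.

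The principal thing to check carefully, rather than an obstacle per se, is that the mean-zero and square-integrability properties of $\nabla_\theta m_{\theta_0}$ follow cleanly from the model assumption $\mathbb{E}[E\mid X]=0$ combined with boundedness of the support; the Gaussian/bounded support hypothesis in Theorem~\ref{thm:consist} makes these verifications routine. No delicate stochastic equicontinuity is required, because the exactness of the Taylor expansion for a quadratic criterion short-circuits the usual argument based on Donsker classes.
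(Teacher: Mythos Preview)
Your proposal is correct and follows essentially the same approach as the paper: both verify the hypotheses of Theorem~5.23 in \cite{van2000asymptotic} by using the boundedness of the support and of the parameter space to obtain a uniform bound on $\nabla_\theta m_\theta$, which yields the required Lipschitz envelope (the paper does this via the mean value theorem). Your additional observation that the quadratic structure of $m_\theta$ makes the Taylor expansion of the score exact is a nice simplification---it lets you bypass the general stochastic equicontinuity machinery inside Theorem~5.23 and argue directly via CLT, LLN, and Slutsky---but the underlying strategy and the ingredients used are the same as in the paper.
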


\begin{rmk}
    When the norm $\| \cdot \|_{{(m_{2\oplus}, \Sigma_{2\oplus})}}$ is equal to the Frobenius norm, that is, $m_{2\oplus} = 0$ and $\Sigma_{2\oplus} = I$, the  
    second-derivative matrix $V_{\theta_0}$ has the form 
    \[
    V_{\theta_0}
    =
    \begin{pmatrix}
        \mathbb{E}[\mathrm{vech}^\ast(X_i)\mathrm{vech}^\ast(X_i)^\top] & & {\Huge O} \\
        & \ddots & \\ 
        {\Huge O} & & \mathbb{E}[\mathrm{vech}^\ast(X_i)\mathrm{vech}^\ast(X_i)^\top]
    \end{pmatrix}.
    \]
    Therefore, $V_{\theta_0}$ is nonsingular if and only if the matrix $\mathbb{E}[\mathrm{vech}^\ast(X_i)\mathrm{vech}^\ast(X_i)^\top]$ is nonsingular.
\end{rmk}

\begin{proof}
    We check the conditions of Theorem 5.23 in \cite{van2000asymptotic}, which is a standard result for the asymptotic normality of the M-estimator.
    Noting that the norm $\|z\|_{{(m_{2\oplus}, \Sigma_{2\oplus})}}$ has the form 
    \eqref{eq:norm_form} for a vector  $z = (z_1,..., z_{d_2(d_2+3)/2}) \in \mathbb{R}^{d_2(d_2+3)/2}$, the function $\theta \mapsto m_\theta(\mathrm{vech}^\ast(X), \mathrm{vech}^\ast(Y))$ is differentiable on the interior of $\Theta_0$ for each fixed $\mathrm{vech}^\ast(X)$ and 
    $\mathrm{vech}^\ast(Y)$.
    Moreover, because the parameter $\theta$ and 
    vectors $\mathrm{vech}^\ast(X)$ and 
    $\mathrm{vech}^\ast(Y)$ are in bounded regions, the partial derivative $\nabla_\theta m_\theta$ is also bounded. That is, there exists a constant $M > 0$ such that $\|\nabla_\theta m_\theta(\mathrm{vech}^\ast(X), \mathrm{vech}^\ast(Y))\| \le M$ for all $\theta \in \Theta_0,\mathrm{vech}^\ast(X)$ and $\mathrm{vech}^\ast(Y)$.
    Combining this fact with the multi-dimensional mean value theorem, for every $\theta_1$ and $\theta_2$ in a neighborhood of $\theta_0$, we have 
    \[
    |m_{\theta_1}(\mathrm{vech}^\ast(X), \mathrm{vech}^\ast(Y)) - m_{\theta_2}(\mathrm{vech}^\ast(X), \mathrm{vech}^\ast(Y))|
    \le 
    M\|\theta_1 - \theta_2\|.
    \]
    Finally, the map $\theta \mapsto \mathbb{E}[m_\theta(\mathrm{vech}^\ast(X_i), \mathrm{vech}^\ast(Y_i))]$ is assumed to have nonsingular Hessian matrix $V_{\theta_0}$ at $\theta_0$. Then, the conditions of Theorem 5.23 in \cite{van2000asymptotic} are fulfilled, and we have the conclusion from the theorem.
\end{proof}

\bibliographystyle{plain}
\bibliography{ref}

\end{document}